\newcommand{\assume}{\rho_0}
\newcommand{\target}{\rho_1}
\newcommand{\into}{\hookrightarrow}
\newcommand{\by}[1]{ && \text{(#1)}}
\newcommand{\glbox}{[\forall]}
\newcommand{\gldia}{[\exists]}
\renewcommand{\Box}{\square}
\newcommand{\Pow}{\mathcal{P}}
\newcommand{\clos}{\mathsf{cl}}
\newcommand{\dfr}{\mathsf{dfr}}
\newcommand{\clf}[1]{\theta^*(#1)}
\newcommand{\clo}[1]{\mathsf{cl}(#1)}
\newcommand{\sem}[1]{\llbracket #1 \rrbracket}
\newcommand{\infrule}[2]{\frac{#1}{#2}}
\spnewtheorem{thm}[theorem]{Theorem}{\bfseries}{\itshape}
\spnewtheorem{cor}[theorem]{Corollary}{\bfseries}{\itshape}
\spnewtheorem{cnj}[theorem]{Conjecture}{\bfseries}{\itshape}
\spnewtheorem{lem}[theorem]{Lemma}{\bfseries}{\itshape}
\spnewtheorem{lemdefn}[theorem]{Lemma and Definition}{\bfseries}{\itshape}
\spnewtheorem{prop}[theorem]{Proposition}{\bfseries}{\itshape}
\spnewtheorem{defn}[theorem]{Definition}{\bfseries}{\upshape}
\spnewtheorem{rem}[theorem]{Remark}{\bfseries}{\upshape}
\spnewtheorem{notation}[theorem]{Notation}{\bfseries}{\upshape}
\spnewtheorem{expl}[theorem]{Example}{\bfseries}{\upshape}
\spnewtheorem{thmdefn}[theorem]{Theorem and Definition}{\bfseries}{\itshape}
\spnewtheorem{propdefn}[theorem]{Proposition and Definition}{\bfseries}{\itshape}
\spnewtheorem{assumption}[theorem]{Assumption}{\bfseries}{\upshape}
\spnewtheorem{algo}[theorem]{Algorithm}{\bfseries}{\upshape}
 \renewenvironment{theorem}{\begin{thm}}{\end{thm}}
 \renewenvironment{definition}{\begin{defn}}{\end{defn}}
 \renewenvironment{remark}{\begin{rem}}{\end{rem}}
 \renewenvironment{example}{\begin{expl}}{\end{expl}}
\newcommand{\ALC}{\mathcal{ALC}}
\def\moverlay{\mathpalette\mov@rlay}
\def\mov@rlay#1#2{\leavevmode\vtop{%
   \baselineskip\z@skip \lineskiplimit-\maxdimen
   \ialign{\hfil$\m@th#1##$\hfil\cr#2\crcr}}}
\newcommand{\charfusion}[3][\mathord]{
    #1{\ifx#1\mathop\vphantom{#2}\fi
        \mathpalette\mov@rlay{#2\cr#3}
      }
    \ifx#1\mathop\expandafter\displaylimits\fi}
\newcommand{\Sem}[1]{{[\![#1]\!]}}
\newcommand{\psem}[1]{\widehat{[\![#1]\!]}}
\newcommand{\diamod}[1]{{\langle #1\rangle}}
\newcommand{\boxmod}[1]{{[#1]}}
\DeclareMathOperator{\LFP}{\mathsf{LFP}}
\DeclareMathOperator{\GFP}{\mathsf{GFP}}
\DeclareMathOperator{\cl}{cl}
\newcommand\ExpTime{$\textsc{ExpTime}$\xspace}
\newcommand\PSpace{$\textsc{PSpace}$\xspace}
\newcommand\NP{$\textsc{NP}$\xspace}
\newcommand{\myparagraph}[1]{\medskip\par\noindent\textbf{\textbf{#1}}\hspace{6pt}}
\begin{document}

\title{NP Reasoning in the Monotone $\mu$-Calculus}

\author{Daniel Hausmann and Lutz Schr\"oder}

\institute{Friedrich-Alexander-Universit\"at Erlangen-N\"urnberg, Germany} 

\maketitle

\begin{abstract} Satisfiability checking for monotone modal logic is
  known to be (only) NP-complete. We show that this remains true when
  the logic is extended with alternation-free
  fixpoint operators as well as the universal modality; the resulting
  logic -- the \emph{alternation-free monotone
    $\mu$-calculus with the universal modality} -- contains both
  concurrent propositional dynamic logic (CPDL) and the
  alternation-free fragment of game logic as fragments. We obtain our
  result from a characterization of satisfiability by means of Büchi
  games with polynomially many Eloise nodes.
\end{abstract}


\section{Introduction}
\emph{Monotone modal logic} differs from normal modal logics (such
as~$K$~\cite{Chellas80}, equivalent to the standard description
logic~$\ALC$~\cite{BaaderEA03}) by giving up distribution of
conjunction over the box modality, but retaining monotonicity of the
modalities. Its semantics is based on (monotone) neighbourhood models
instead of Kripke models. Monotone modalities have been variously used
as epistemic operators that restrict the combination of knowledge by
epistemic agents~\cite{Vardi89}; as next-step modalities in the
evolution of concurrent systems, e.g.\ in \emph{concurrent
  propositional dynamic logic} (CPDL)~\cite{Peleg87}; and as game
modalities in systems where one transition step is determined by moves
of two players, e.g.\ in Parikh's \emph{game
  logic}~\cite{Parikh85,PaulyParikh03,HansenKupke04,EnqvistEA19}. The
monotonicity condition suffices to enable formation of fixpoints; one
thus obtains the \emph{monotone $\mu$-calculus}~\cite{EnqvistEA15},
which contains both CPDL and game logic as fragments (indeed, the
recent proof of completeness of game logic~\cite{EnqvistEA19} is based
on embedding game logic into the monotone $\mu$-calculus).

While many modal logics (including $K$/$\ALC$) have \PSpace-complete
satisfiability problems in the absence of fixpoints, it is known that
satisfiability in monotone modal logic is only
\NP-complete~\cite{Vardi89} (the lowest possible complexity given that
the logic has the full set of Boolean connectives). In the present
paper, we show that the low complexity is preserved under two
extensions that usually cause the complexity to rise from
\PSpace-complete to \ExpTime-complete: Adding the \emph{universal
  modality} (equivalently global axioms or, in description logic
parlance, a general TBox) and alternation-free
\emph{fixpoints}; that is, we show that satisfiability checking in the
\emph{alternation-free fragment of the monotone
  $\mu$-calculus with the universal modality}~\cite{EnqvistEA15} is
only \NP-complete. This logic subsumes both CPDL and the
alternation-free fragment of game logic~\cite{PaulyParikh03}. Thus,
our results imply that satisfiability checking in these logics is only
\NP-complete (the best previously known upper bound being~\ExpTime{}
in both cases~\cite{Parikh85,Peleg87,PaulyParikh03}); for comparison,
standard propositional dynamic logic (PDL) and in fact already the
extension of~$K$ with the universal modality are \ExpTime-hard. (Our
results thus seemingly contradict previous results on
\ExpTime-completeness of CPDL. However, these results rely on
embedding standard PDL into CPDL, which requires changing the
semantics of CPDL to interpret atomic programs as sequential programs,
i.e.\ as relations rather than neighbourhood systems~\cite{Peleg87}.)
Our results are based on a variation of the game-theoretic approach to
$\mu$-calculi~\cite{NiwinskiWalukiewicz96}. Specifically, we reduce
satisfiability checking to the computation of winning regions in a
satisfiability game that has exponentially many
$\mathsf{Abelard}$-nodes but only polynomially many
$\mathsf{Eloise}$-nodes, so that history-free winning strategies for
$\mathsf{Eloise}$ have polynomial size. From this approach we also
derive a polynomial model property.

\myparagraph{Organization} We recall basics on fixpoints and games in
Section~\ref{sec:prelim}, and the syntax and semantics of the monotone
$\mu$-calculus in Section~\ref{sec:monotone-mu}. We discuss a key
technical tool, \emph{formula tracking}, in
Section~\ref{sec:tracking}. We adapt the standard tableaux system to
the monotone $\mu$-calculus in Section~\ref{sec:tableaux}. In
Section~\ref{sec:games}, we establish our main results using a game
characterization of satisfiability.

\section{Notation and Preliminaries}\label{sec:prelim}

\noindent We fix basic concepts and notation on fixpoints and games.

\myparagraph{Fixpoints} Let $U$ be a set; we write $\Pow(U)$ for
the powerset of~$U$. Let $f:\Pow(U)\to\Pow(U)$ be a \emph{monotone}
function, i.e.\ $f(A)\subseteq f(B)$ whenever
$A\subseteq B\subseteq U$.  By the Knaster-Tarski fixpoint theorem,
the \emph{greatest} ($\GFP$) and \emph{least} ($\LFP$)
\emph{fixpoints} of~$f$ are given by
\begin{align*}
\LFP f =& \textstyle\bigcap\{V\subseteq U\mid f(V)\subseteq V\} &
\GFP f =& \textstyle\bigcup\{V\subseteq U\mid V\subseteq f(V)\},
\end{align*}
and are thus also the least prefixpoint ($f(V)\subseteq V$) and
greatest postfixpoint ($V\subseteq f(V)$) of~$f$, respectively. For
$V\subseteq U$ and $n\in\mathbb{N}$, we define $f^n(V)$ as expected by
$f^0(V)=V$ and $f^{n+1}(V)=f(f^n(V))$.  If $U$ is finite, then
$\LFP f = f^{|U|}(\emptyset)$ and $\GFP f = f^{|U|}(U)$ by Kleene's
fixpoint theorem.


\myparagraph{Infinite Words and Games} We denote the sets of finite
and infinite sequences of elements of a set $U$ by $U^*$ and
$U^\omega$, respectively.  We often view sequences
$\tau=u_1,u_2,\ldots$ in~$U^*$ (or $U^\omega$) as partial (total)
functions $\tau:\mathbb{N}\rightharpoonup U$, writing
$\tau(i)=u_i$. We write
\begin{align*}
\mathsf{Inf}(\tau)=\{u\in U\mid \forall i\in\mathbb{N}.\, \exists j>i.\,
\tau(j)=u\}
\end{align*}
for the set of elements of $U$ that occur infinitely often in $\tau$.

A \emph{B\"uchi game} $G=(V,E,v_0,F)$
consists of a set $V$ of \emph{nodes}, partitioned into the sets
$V_\exists$ of $\mathsf{Eloise}$-nodes and $V_\forall$ of
$\mathsf{Abelard}$-nodes, a set $E\subseteq V\times V$ of
\emph{moves}, an \emph{initial node} $v_0$, and a set $F\subseteq V$
of \emph{accepting} nodes. We write $E(v)=\{v'\mid (v,v')\in E\}$. For
simplicity, assume that $v_0\in V_\exists$ and that the game is
\emph{alternating}, i.e.\ $E(v)\subseteq V_{\forall}$ for all
$v\in V_\exists$, and $E(v)\subseteq V_{\exists}$ for all
$v\in V_\forall$ (our games will have this shape). A \emph{play} of
$G$ is a sequence $\tau=v_0,v_1,\ldots$ of nodes such that
$(v_i,v_{i+1})\in E$ for all $i\geq 0$ and~$\tau$ is either infinite
or ends in a node without outgoing moves. $\mathsf{Eloise}$ \emph{wins} a
play~$\tau$ if and only if $\tau$ is finite and ends in an
$\mathsf{Abelard}$-node or $\tau$ is infinite and
$\mathsf{Inf}(\tau)\cap F\neq\emptyset$, that is, $\tau$ infinitely
often visits an accepting node.  A \emph{history-free
  $\mathsf{Eloise}$-strategy} is a partial function
$s:V_\exists\rightharpoonup V$ such that $s(v_0)$ is defined, and
whenever $s(v)$ is defined, then $(v,s(v))\in E$ and $s(v')$ is
defined for all $v'\in E(s(v))$. A play~$v_0,v_1,\dots$ is an
\emph{$s$-play} if $v_{i+1}=s(v_i)$ whenever $v_i\in V_\exists$. We
say that~$s$ is a \emph{winning strategy} if~$\mathsf{Eloise}$ wins
every $s$-play, and that $\mathsf{Eloise}$ \emph{wins}~$G$ if there is
a winning strategy for $\mathsf{Eloise}$. B\"uchi games are
history-free determined, i.e.\ in every B\"uchi game, one of the
players has a history-free winning strategy~\cite{Mazala02}.

\section{The Monotone $\mu$-Calculus}\label{sec:monotone-mu}

\noindent We proceed to recall the syntax and semantics of the
monotone $\mu$-calculus.

\subsubsection*{Syntax} We fix countably infinite sets $\mathsf{P}$,
$\mathsf{A}$ and $\mathsf{V}$ of \emph{atoms}, \emph{atomic programs}
and \emph{(fixpoint) variables}, respectively; we assume that
$\mathsf{P}$ is closed under duals (i.e.\ atomic negation), i.e.\
$p\in \mathsf{P}$ implies $\overline{p}\in \mathsf{P}$, where
$\overline{\overline{p}}=p$.  Formulae of the \emph{monotone
  $\mu$-calculus} (in negation normal form) are then defined by the
grammar
\begin{equation*}
\psi,\phi:= \bot\mid\top \mid p \mid \psi\wedge\phi \mid \psi\vee\phi \mid \diamod{a}\psi \mid 
\boxmod{a}\psi \mid  X \mid \eta X.\psi 
\end{equation*}
where $p\in \mathsf{P}$, $a\in \mathsf{A}$, $X\in \mathsf{V}$; throughout,
we use $\eta\in\{\mu,\nu\}$ to denote extremal fixpoints. 
 As usual, $\mu$ and~$\nu$ are understood as
taking least and greatest fixpoints, respectively, and bind their
variables, giving rise to the standard notion of \emph{free variable}
in a formula~$\psi$. We write $\mathsf{FV}(\psi)$ for the set of free
variables in~$\psi$, and say that~$\psi$ is \emph{closed} if
$\mathsf{FV}(\psi)=\emptyset$.  Negation~$\neg$ is not included but can be
defined by taking negation normal forms as usual, with
$\neg p=\overline p$. We refer to formulae of the shape $\boxmod{a}\phi$ or $\diamod{a}\phi$ as \emph{($a$-)modal literals}. As indicated in the introduction, the
\emph{modalities} $\boxmod{a}$, $\diamod{a}$ have been equipped with
various readings, recalled in more detail in
Example~\ref{expl:game-cpdl}. 

Given a closed formula $\psi$,
the \emph{closure}  $\clo{\psi}$ of $\psi$ is
defined to be the least set of formulae that contains $\psi$ 
and satisfies the following closure properties:
\begin{align*}
\text{if } & \psi_1\wedge\psi_2\in\clo{\psi} \text{ or }
\psi_1\vee\psi_2\in\clo{\psi},\text{ then }
\{\psi_1,\psi_2\}\subseteq\clo{\psi},\\
\text{if } & \diamod{a}\psi_1\in\clo{\psi}  \text{ or }
\boxmod{a}\psi_1\in\clo{\psi},
\text{ then }
\psi_1\in\clo{\psi},\\
\text{if } & \eta X.\psi_1\in\clo{\psi}, \text{ then }
\psi_1[\eta X.\psi_1/X]\in\clo{\psi},
\end{align*}
where $\psi_1[\eta X.\psi_1/X]$ denotes the formula that is
obtained from $\psi_1$ by replacing every free occurrence of $X$ in
$\psi_1$ with $\eta X.\psi_1$.  Note that all formulae in $\clo{\psi}$
are closed. We define the \emph{size}~$|\psi|$ of~$\psi$ as
$|\psi|=|\clo{\psi}|$.  
A formula~$\psi$ is \emph{guarded} if whenever
$\eta X.\,\phi\in\clo{\psi}$, then all free occurrences of~$X$ in
$\phi$ are under the scope of at least one modal operator.  \emph{We
  generally restrict to guarded formulae}; see however
Remark~\ref{rem:sharing}. A closed formula~$\psi$ is \emph{clean} if
all fixpoint variables in~$\psi$ are bound by exactly one fixpoint
operator. Then $\theta(X)$ denotes \emph{the} subformula $\eta X.\phi$
that binds~$X$ in~$\psi$, and~$X$ is a least (greatest) fixpoint
variable if $\eta=\mu$ ($\eta=\nu$).  We define a partial
order~$\geq_\mu$ on the least fixpoint variables in~$\target$
and~$\assume$ by $X\geq_\mu Y$ iff $\theta(Y)$ is a subformula of
$\theta(X)$ and $\theta(Y)$ is not in the scope of a greatest fixpoint
operator within $\theta(X)$ (i.e.\ there is no greatest fixpoint
operator between $\mu X$ and $\mu Y$). The \emph{index}
$\mathsf{idx}(X)$ of such a fixpoint variable $X$ is
\begin{equation*}
\mathsf{idx}(X)=|\{Y\in\mathsf{V}\mid Y\geq_\mu X\}|,
\end{equation*}
For a subformula~$\phi$ of~$\psi$, we write
$\mathsf{idx}(\phi)=\max\{\mathsf{idx}(X)\mid
X\in\mathsf{FV}(\phi)\}$. We denote by $\clf{\phi_0}$ the closed
formula that is obtained from a subformula~$\phi_0$ of~$\psi$ by
repeatedly replacing free variables~$X$ with $\theta(X)$. Formally, we
define $\clf{\phi_0}$ as $\phi_{|\mathsf{FV}(\phi_0)|}$,
where~$\phi_{i+1}$ is defined inductively from~$\phi_i$.  If $\phi_i$
is closed, then put $\phi_{i+1}=\phi_i$. Otherwise, pick the
variable~$X_i\in\mathsf{FV}(\phi_i)$ with the greatest index and put
$\phi_{i+1}=\phi_i[\theta(X_i)/X_i]$.  Then
$\mathsf{idx}(\phi_{i+1})<\mathsf{idx}(\phi_i)$, so $\clf{\phi_0}$
really is closed; moreover, one can show that
$\clf{\phi_0}\in\clo{\psi}$.  A clean formula is
\emph{alternation-free} if none of its subformulae contains both a
free least and a free greatest fixpoint variable.
Finally,~$\psi$
is \emph{irredundant} if $X\in\mathsf{FV}(\phi)$ whenever
$\eta X.\phi\in\clo{\psi}$.

\begin{rem}\label{rem:sharing}
  We have defined the size of formulae as the cardinality of their
  closure, implying a very compact representation~\cite{BruseFL15}.
  Our upper complexity bounds thus become \emph{stronger}, i.e.\ they
  hold even for this small measure of input size. Moreover, the
  restriction to guarded formulae is then without loss of generality,
  since one has a \emph{guardedness transformation} that transforms
  formulae into equivalent guarded ones, with only polynomial blowup
  of the closure~\cite{BruseFL15}.
\end{rem}
\subsubsection*{Semantics} The monotone $\mu$-calculus is interpreted
over \emph{neighbourhood models} (or \emph{epistemic
  structures}~\cite{Vardi89}) $F=(W,N,I)$ where
$N:\mathsf{A}\times W\to 2^{(2^W)}$ assigns to each atomic program~$a$
and each state~$w$ a set $N(a,w)\subseteq 2^W$ of
\emph{$a$-neighbourhoods} of~$w$, and $I: \mathsf{P}\to 2^W$ interprets
propositional atoms such that $I(p)=W\setminus I(\overline{p})$ for $p\in\mathsf{P}$ (by $2$, we denote the set $\{\bot,\top\}$ of
Boolean truth values, and $2^W$ is the set of maps~$W\to 2$, which is
in bijection with the powerset $\Pow(W)$). Given such an~$F$, each
formula~$\psi$ is assigned an \emph{extension}
$\sem{\psi}_\sigma\subseteq W$ that additionally depends on a
\emph{valuation} $\sigma:V\to 2^W$, and is inductively defined by
\begin{align*}
\sem{p}_\sigma & = I(p) & \sem{X}_\sigma &= \sigma(X)\\
\sem{\psi\wedge\phi}_\sigma & = \sem{\psi}_\sigma\cap \sem{\phi}_\sigma&
\sem{\psi\vee\phi}_\sigma & = \sem{\psi}_\sigma\cup \sem{\phi}_\sigma\\
\sem{\diamod{a}\psi}_\sigma & = \{w\in W\mid \exists S\in N(a,w).\,S\subseteq
\sem{\psi}_\sigma\}&  
\sem{\mu X.\psi}_\sigma&=\LFP \sem{\psi}^X_{\sigma}\\
\sem{\boxmod{a}\psi}_\sigma & = \{w\in W\mid 
\forall S\in N(a,w).\,S\cap\sem{\psi}_\sigma\neq \emptyset\}& 
\sem{\nu X.\psi}_\sigma&=\GFP \sem{\psi}^X_{\sigma}
\end{align*}
where, for $U\subseteq W$ and fixpoint variables $X,Y\in \mathsf{V}$,
we put $\sem{\psi}^X_\sigma(U)=\sem{\psi}_{\sigma[X\mapsto U]}$,
$(\sigma[X\mapsto U])(X)= U$ and $(\sigma[X\mapsto U])(Y)= \sigma(Y)$
if $X\neq Y$. We omit the dependence on~$F$ in the notation
$\Sem{\phi}_\sigma$, and when necessary clarify the underlying neighbourhood
model by phrases such as `in~$F$'. If $\psi$ is closed, then its extension
does not depend on the valuation, so we just write $\sem{\psi}$.
A closed formula $\psi$ is \emph{satisfiable} if there is a
neighbourhood model~$F$ such that $\sem{\psi}\neq\emptyset$ in~$F$; in
this case, we also say that~$\psi$ is \emph{satisfiable over~$F$}.
Given a set $\Psi$ of closed formulae, we write
$\sem{\Psi}=\bigcap_{\psi\in\Psi}\sem{\psi}$.
An \emph{(infinite) path} through a neighbourhood model $(W,N,I)$ is a sequence $x_0,x_1,\ldots$ of states $x_i\in W$ such that
for all $i\geq 0$, there are $a\in\mathsf{A}$ and $S\in N(a,x_i)$ such that
$x_{i+1}\in S$.

The soundness direction of our game characterization will rely on the
following immediate property of the semantics, which may be seen as
soundness of a modal tableau rule~\cite{CirsteaEA11a}.

\begin{lem}~\cite[Proposition~3.8]{Vardi89} \label{lem:rule} If\/
  $\boxmod{a}\phi\land\diamod{a}\psi$ is satisfiable over a
  neighbourhood model~$F$, then $\phi\land\psi$ is also satisfiable
  over~$F$.
\end{lem}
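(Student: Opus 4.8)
The plan is to unwind the neighbourhood semantics of the two modal literals at a witnessing state and extract a common witness for $\phi$ and $\psi$. Since $\boxmod{a}\phi\land\diamod{a}\psi$ is satisfiable over $F=(W,N,I)$, fix $w\in W$ with $w\in\sem{\boxmod{a}\phi}\cap\sem{\diamod{a}\psi}$; note that all formulae occurring here are closed, so their extensions do not depend on a valuation. From $w\in\sem{\diamod{a}\psi}$ we obtain, by the semantic clause for $\diamod{a}$, a neighbourhood $S\in N(a,w)$ with $S\subseteq\sem{\psi}$. Since this $S$ is an $a$-neighbourhood of~$w$, the clause for $\boxmod{a}$ applied to $w\in\sem{\boxmod{a}\phi}$ gives $S\cap\sem{\phi}\neq\emptyset$; pick $x\in S\cap\sem{\phi}$. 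Then $x\in\sem{\phi}$ and, because $S\subseteq\sem{\psi}$, also $x\in\sem{\psi}$, so $x\in\sem{\phi}\cap\sem{\psi}=\sem{\phi\land\psi}$, witnessing that $\phi\land\psi$ is satisfiable over~$F$.

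There is essentially no obstacle here: the statement is an immediate consequence of the definitions of $\sem{\boxmod{a}\argument}$ and $\sem{\diamod{a}\argument}$ on neighbourhood models, and the only point requiring (minimal) care is to instantiate the universally quantified box condition at precisely the neighbourhood~$S$ that witnesses the existentially quantified diamond condition, rather than at an arbitrary neighbourhood.
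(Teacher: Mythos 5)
Your proof is correct and is exactly the intended argument: the paper states this lemma as an immediate consequence of the semantics (citing Vardi's Proposition~3.8) without spelling it out, and your unwinding — instantiating the universal box condition at the neighbourhood~$S$ witnessing the diamond — is precisely that immediate argument.
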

\begin{rem}\label{rem:box-vs-diamond}
  The dual box and diamond operators~$\boxmod{a}$ and $\diamod{a}$ are
  completely symmetric, and indeed the notation is not uniform in the
  literature. Our use of $\boxmod{a}$ and $\diamod{a}$ is generally in
  agreement with work on game logic~\cite{Parikh85} and
  CPDL~\cite{Peleg87}; in work on monotone modal logics and the
  monotone $\mu$-calculus, the roles of box and diamond are often
  interchanged~\cite{Vardi89,EnqvistEA15,BenthemEA19}.
\end{rem}
\begin{rem}
  The semantics may equivalently be presented in terms of
  \emph{monotone} neighbourhood models, where the set of
  $a$-neighbourhoods of a state is required to be upwards closed under
  subset
  inclusion~\cite{Parikh85,PaulyParikh03,HansenKupke04,EnqvistEA15}. In
  this semantics, the interpretation of $\diamod{a}\phi$ simplifies to
  just requiring that the extension of~$\phi$ is an $a$-neighbourhood
  of the current state. We opt for the variant where upwards closure
  is instead incorporated into the interpretation of the modalities,
  so as to avoid having to distinguish between monotone neighbourhood
  models and their representation as upwards closures of (plain)
  neighbourhood models, e.g.\ in small model theorems.
\end{rem}
We further extend the expressiveness of the logic (see also
Remark~\ref{rem:submodels}) by adding \emph{global assumptions} or
equivalently the \emph{universal modality}:
\begin{defn}[Global assumptions]
  Given a closed formula~$\phi$, a \emph{$\phi$-model} is a neighbourhood
  model $F=(W,N,I)$ in which $\Sem{\phi}=W$. A formula~$\psi$ is
  \emph{$\phi$-satisfiable} if~$\psi$ is satisfiable over some
  $\phi$-model; in this context, we refer to~$\phi$ as the
  \emph{global assumption}, and to the problem of deciding whether
  $\psi$ is $\phi$-satisfiable as \emph{satisfiability checking under
    global assumptions}.
\end{defn}
We also define an extension of the monotone $\mu$-calculus, the
\emph{monotone $\mu$-calculus with the universal modality}, by adding
two alternatives
\begin{equation*}
  \dots \mid \glbox \phi \mid \gldia \phi
\end{equation*}
to the grammar, in both alternatives restricting~$\phi$ to be
closed. The definition of the semantics over a neighbourhood model
$(W,N,I)$ and valuation~$\sigma$ is correspondingly extended by
$\Sem{\glbox\phi}_\sigma= W$ if $\Sem{\phi}_\sigma=W$, and
$\Sem{\glbox\phi}_\sigma= \emptyset$ otherwise; dually,
$\Sem{\gldia \phi}_\sigma= W$ if $\Sem{\phi}_\sigma\neq\emptyset$, and
$\Sem{\gldia\phi}_\sigma= \emptyset$ otherwise. That is, $\glbox\phi$ says
that~$\phi$ holds in all states of the model, and $\gldia\phi$
that~$\phi$ holds in some state.
\begin{rem}\label{rem:global}
  In description logic, global assumptions are typically called
  (general) \emph{TBoxes} or
  \emph{terminologies}~\cite{BaaderEA03}. For many next-step modal
  logics (i.e.\ modal logics without fixpoint operators),
  satisfiability checking becomes harder under global assumptions. A
  typical case is the standard modal logic~$K$ (corresponding to the
  description logic~$\mathcal{ALC}$), in which (plain) satisfiability
  checking is \PSpace-complete~\cite{Ladner77} while satisfiability
  checking under global assumptions is
  \ExpTime-complete~\cite{FischerLadner79,Donini03}. Our results show
  that such an increase in complexity does not happen for
  monotone modalities.

  For purposes of satisfiability checking, the universal modality and
  global assumptions are mutually reducible in a standard manner,
  where the non-trivial direction (from the universal modality to
  global assumptions) is by guessing beforehand which subformulae
  $\glbox\phi$, $\gldia\phi$ hold (see also~\cite{GorankoPassy92}).
\end{rem}

\begin{expl}\label{expl:game-cpdl}
\begin{enumerate}[wide]
\item In \emph{epistemic logic}, neighbourhood models have been termed
  \emph{epistemic structures}~\cite{Vardi89}. In this context, the
  $a\in \mathsf{A}$ are thought of as \emph{agents}, the
   $a$-neighbourhoods of
  a state~$w$ are the facts known to agent~$a$ in~$w$, and
  correspondingly the reading of $\boxmod{a}\phi$ is `$a$
  knows~$\phi$'. The use of (monotone) neighbourhood models and the
  ensuing failure of normality imply that agents can still weaken
  facts that they know but are not in general able to combine them,
  i.e.\ knowing~$\phi$ and knowing~$\psi$ does not entail knowing
  $\phi\land\psi$~\cite{Vardi89}.
\item In \emph{concurrent propositional dynamic logic (CPDL)},
  $a$-neighbourhoods of a state are understood as sets of states that
  can be reached concurrently, while the choice between several
  $a$-neighbourhoods of a state models sequential non-determinism.
  CPDL indexes modalities over composite \emph{programs} $\alpha$,
  formed using
  tests $?\phi$ and the standard operations of propositional dynamic
  logic (PDL) (union~$\cup$, sequential composition~`$;$', and
  Kleene star~$(-)^*$) and additionally intersection~$\cap$. 
  It forms a sublogic of Parikh's game logic, recalled next, and thus
  in particular translates into the monotone $\mu$-calculus.

  As indicated in the introduction, CPDL satisfiability checking has
  been shown to be \ExpTime-complete~\cite{Peleg87}, seemingly
  contradicting our results (Corollary~\ref{cor:game-cpdl}). Note
  however that the interpretation of atomic programs in CPDL,
  originally defined in terms of neighbourhood
  systems~\cite[p.~453]{Peleg87}, is, for purposes of the
  \ExpTime-hardness proof, explicitly changed to
  relations~\cite[pp.~458--459]{Peleg87}; \ExpTime-hardness then
  immediately follows since PDL becomes a sublogic of
  CPDL~\cite[p.~461]{Peleg87}. Our \NP bound applies to the original
  semantics. 
  %
\item \emph{Game logic}~\cite{Parikh85,PaulyParikh03} extends CPDL by
  a further operator on programs, \emph{dualization} $(-)^d$, and
  reinterprets programs as games between two players \emph{Angel} and
  \emph{Demon}; in this view, dualization just corresponds to swapping
  the roles of the players. In comparison to CPDL, the main effect of
  dualization is that one obtains an additional \emph{demonic
    iteration operator} $(-)^\times$, distinguished from standard
  iteration $(-)^*$ by letting Demon choose whether or not to continue
  the iteration. A game logic formula is \emph{alternation-free} if it
  does not contain nested occurrences (unless separated by a test) of
  $(-)^\times$ within~$(-)^*$ or vice versa~\cite{PaulyParikh03}.

  Enqvist et al.~\cite{EnqvistEA19} give a translation of game logic
  into the monotone $\mu$-calculus that is quite similar to
  Pratt's~\cite{Pratt81} translation of PDL into the standard
  $\mu$-calculus. The translation $(-)^\sharp$ is defined by
  commutation with all Boolean connectives and by
  \begin{equation*}
    (\diamod{\gamma}\phi)^\sharp = \tau_\gamma(\phi^\sharp),
  \end{equation*}
  in mutual recursion with a function $\tau_\gamma$ that
  translates the effect of applying $\diamod{\gamma}$ into the
  monotone $\mu$-calculus. (Boxes $\boxmod{\gamma}$ can be replaced
  with $\diamod{\gamma^d}$). We refrain from repeating the full
  definition of~$\tau_\gamma$ by recursion over~$\gamma$; some
  key clauses are
  \begin{equation*}
    \tau_{\gamma\cap\delta}(\psi)  = \tau_\gamma(\psi)\land\tau_\delta(\psi)\quad
    \tau_{\gamma^*}(\psi)  = \mu X.\,(\psi\lor\tau_\gamma(X))\quad
    \tau_{\gamma^\times}(\psi)  = \nu Y.\,(\psi\land\tau_\gamma(Y))
  \end{equation*}
  where in the last two clauses,~$X$ and~$Y$ are chosen as fresh
  variables (for readability, we gloss over a more precise treatment
  of this point given in~\cite{EnqvistEA19}). The first clause (and a
  similar one for~$\cup$) appear at first sight to cause exponential
  blowup but recall that we measure the size of formulae by the
  cardinality of their closure; in this measure, there is in fact no
  blowup. The translated formula $\psi^\sharp$ need not be guarded as
  the clauses for~$^*$ and~$^\times$ can introduce unguarded fixpoint
  variables; as mentioned in Remark~\ref{rem:sharing}, we can however
  apply the guardedness transformation, with only quadratic blowup of
  the closure~\cite{BruseFL15}.
  
  Under this translation, the alternation-free fragment of game logic ends
  up in the (guarded) alternation-free fragment of the monotone
  $\mu$-calculus.
\end{enumerate}
\end{expl}
\noindent For later use, we note
\begin{lem}\label{lem:fin-model}
  The monotone $\mu$-calculus with the universal modality has the
  finite model property.
\end{lem}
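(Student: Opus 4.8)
The plan is to reduce the statement to the (standard) finite model property of the \emph{relational} $\mu$-calculus with the universal modality, via the usual representation of monotone neighbourhood models as two-sorted Kripke models, set up so that the passage can be reversed. First I would associate to a neighbourhood model $F=(W,N,I)$ a Kripke model $\widehat F$ whose worlds are the \emph{state worlds}, i.e.\ the elements of~$W$, together with one \emph{neighbourhood world} for each triple $(a,w,S)$ with $a\in\mathsf{A}$, $w\in W$, $S\in N(a,w)$; for each~$a$ the accessibility relation~$R_a$ connects~$w$ to $(a,w,S)$ whenever $S\in N(a,w)$, and connects $(a,w,S)$ to every $s\in S$. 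I mark the state worlds by a fresh atom~$q$ and keep the interpretation of the other atoms on state worlds as in~$F$. Formulae are translated into relational ones by $(\diamod{a}\phi)^\dagger=\Diamond_a\Box_a\phi^\dagger$, $(\boxmod{a}\phi)^\dagger=\Box_a\Diamond_a\phi^\dagger$, $(\glbox\phi)^\dagger=\glbox(q\to\phi^\dagger)$, $(\gldia\phi)^\dagger=\gldia(q\wedge\phi^\dagger)$, together with commutation with all other connectives and with both fixpoint operators; here $\Diamond_a,\Box_a$ are the relational modalities and I reuse $\glbox,\gldia$ for the relational universal and dual-universal modalities. (The clauses for $\diamod{a},\boxmod{a}$ are exactly the familiar fact that $\diamod{a}\phi$ asks for a neighbourhood all of whose members satisfy~$\phi$, and $\boxmod{a}\phi$ that every neighbourhood meets $\sem{\phi}$.)

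The heart of the argument is one semantic lemma, to be proved by induction on formulae: if a Kripke model $K=(V,(R_a)_a,J)$ validates the structural axiom $\glbox\beta$, where $\beta=\bigwedge_a((q\to\Box_a\neg q)\wedge(\neg q\to\Box_a q))$ with the conjunction ranging over the finitely many atomic programs occurring in~$\psi$, then the neighbourhood model $F^K$ with carrier $Q=J(q)$, interpretation $J\restrict Q$, and neighbourhoods $N(a,v)=\{\,R_a(v')\mid v'\in R_a(v)\,\}$ (well defined into $\Pow(Q)$ thanks to~$\beta$) satisfies $v\in\sem{\phi}_{F^K}\iff v\in\sem{\phi^\dagger}_K$ for every $v\in Q$; the modal and universal-modality cases fall straight out of the definitions of~$R_a$ and of~$q$. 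Applied to $K=\widehat F$, which validates~$\glbox\beta$ and for which one checks $F^{\widehat F}=F$, this gives: if~$\psi$ is satisfiable over~$F$ then $q\wedge\psi^\dagger\wedge\glbox\beta$ is satisfiable over~$\widehat F$. By the finite model property of the relational $\mu$-calculus with the universal modality (a standard consequence of the automata-theoretic theory of the $\mu$-calculus), $q\wedge\psi^\dagger\wedge\glbox\beta$ then has a \emph{finite} model~$K$, and a second application of the lemma — now to this~$K$ — yields a finite neighbourhood model $F^K$ over which~$\psi$ is satisfiable. Global assumptions require no separate treatment, since a global assumption is literally a top-level conjunct of the form $\glbox\phi$ (cf.\ Remark~\ref{rem:global}).

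The step I expect to be delicate is the fixpoint cases of the induction proving the semantic lemma. Since the carriers of~$K$ and $F^K$ differ (the neighbourhood worlds are discarded), the two extremal fixpoints are computed in different lattices, so the induction hypothesis must be strengthened to an invariant relating a valuation~$\sigma$ on $F^K$ to a valuation~$\widehat\sigma$ on~$K$ by $\sigma(X)=\widehat\sigma(X)\cap Q$, and one must verify that this invariant is preserved along the Knaster--Tarski approximant chains defining $\LFP$ and $\GFP$; this works because $(-)^\dagger$ is compositional and because intersection with~$Q$ commutes with the unions (resp.\ intersections) taken along those chains. Everything else is bookkeeping. I would also note that routing through the relational $\mu$-calculus seems hard to avoid here: plain filtration through the closure $\clo{\psi}$ is not sound for fixpoint formulae — identifying states can create cycles that collapse a least fixpoint — which is the familiar reason why the finite model property of the $\mu$-calculus is not obtained by naive filtration.
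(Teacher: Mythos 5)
Your proposal is correct and follows essentially the same route as the paper: represent neighbourhood models as two-sorted relational models (neighbourhoods become extra worlds), translate the monotone modalities into two-step relational modalities, invoke the finite model property of the relational $\mu$-calculus, and translate a finite relational model back into a finite neighbourhood model via a semantic correspondence lemma proved by induction with care in the fixpoint cases. The only differences are cosmetic: you mark state worlds with a fresh atom $q$, reuse $R_a$ under a bipartiteness axiom, and relativize the universal modality to $q$, whereas the paper uses a fresh relation $e$, reduces the universal modality to global assumptions expressed with the reachability modality $\boxtimes$, and keeps all relational worlds as states in the back-translation (which makes its fixpoint cases immediate, while yours need the $\sigma(X)=\widehat\sigma(X)\cap Q$ invariant you correctly identify).
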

\begin{proof}[sketch]
  We reduce to global assumptions as per Remark~\ref{rem:global}, and
  proceed by straightforward adaptation of the translations of
  monotone modal logic~\cite{KrachtWolter99} and game
  logic~\cite{PaulyParikh03} into the \emph{relational}
  $\mu$-calculus, thus inheriting the finite model
  property~\cite{BradfieldStirling06}. This translation is based on
  turning neighbourhoods into additional states, connected to their
  elements via a fresh relation~$e$. Then, e.g., the monotone
  modality~$\boxmod{a}$ (in our notation, cf.\
  Remark~\ref{rem:box-vs-diamond}) is translated into
  $\boxmod{a}\diamod{e}$ (relational modalities).  Moreover, we
  translate a global assumption~$\phi$ into a formula saying that all
  reachable states satisfy~$\phi$, expressed in the $\mu$-calculus in
  a standard fashion. \qed
\end{proof}

\begin{rem}\label{rem:submodels}
  In the relational $\mu$-calculus, we can encode a
  modality~$\boxtimes$ `in all reachable states', generalizating the
  $\mathsf{AG}$ operator from CTL. As already indicated in the proof
  of Lemma~\ref{lem:fin-model}, this modality allows for a
  straightforward reduction of satisfiability under global assumptions
  to plain satisfiablity in the relational $\mu$-calculus: A
  formula~$\psi$ is satisfiable under the global assumption~$\phi$ iff
  $\psi\land\boxtimes\phi$ is satisfiable, where `if' is shown by
  restricting the model to reachable states. To motivate separate
  consideration of the universal modality, we briefly argue why an
  analogous reduction does not work in the monotone $\mu$-calculus.

  It is not immediately clear what reachability would mean on
  neighbourhood models. We can however equivalently rephrase the
  definition of~$\boxtimes$ in the relational case to let
  $\boxtimes\phi$ mean `the present state is contained in a submodel
  in which every state satisfies~$\phi$', where as usual a submodel of
  a relational model~$C$ with state set~$W$ is a model $C'$ with state
  set $W'\subseteq W$ such that the graph of the inclusion $W'\into W$
  is a bisimulation from~$C'$ to~$C$. We thus refer to~$\boxtimes$ as
  the \emph{submodel modality}. This notion transfers to neighbourhood
  models using a standard notion of bisimulation: A \emph{monotone
    bisimulation}~\cite{Pauly99,PaulyThesis} between neighbourhood
  models $(W_1,N_1,I_1)$, $(W_2,N_2,I_2)$ is a relation
  $S\subseteq W_1\times W_2$ such that whenever $(x,y)\in S$, then
  \begin{itemize}
  \item[--] for all $a\in\mathsf{A}$ and $A\in N_1(a,x)$,
    there is $B\in N_2(a,y)$ such that for all $v\in B$, there is
    $u\in A$ such that $(u,v)\in S$,
  \item[--] for all $a\in\mathsf{A}$ and $B\in N_2(a,y)$,
    there is $A\in N_1(a,x)$ such that for all $u\in A$, there
    is a $v\in B$ such that $(u,v)\in S$,
  \item[--] for all $p\in\mathsf{P}$, $x\in I_1(p)$ if and only if
    $y\in I_2(p)$.
  \end{itemize}
  We then define a \emph{submodel} of a neighbourhood model
  $F=(W,N,I)$ to be a neigbourhood model $F'=(W',N',I')$ such that
  $W'\subseteq W$ and the graph of the inclusion $W'\into W$ is a
  monotone bisimulation between~$F'$ and~$F$. If $(x,y)\in S$ for some
  monotone bisimulation~$S$, then~$x$ and~$y$ satisfy the same
  formulae in the monotone $\mu$-calculus~\cite{EnqvistEA15}. It
  follows that the submodel modality~$\boxtimes$ on neighbourhood
  models, defined verbatim as in the relational case, allows for the
  same reduction of satisfiability under global assumptions as the
  relational submodel modality.

  However, the submodel modality fails to be expressible in the
  monotone $\mu$-calculus, as seen by the following example. Let
  $F_1=(W_1,N_1,I_1)$, $F_2=(W_2,N_2,I_2)$ be the neighbourhood models
  given by $W_1=\{x_1,u_1,v_{11},v_{12}\}$, $W_2=\{x_2,u_2,v_2\}$,
  $N_1(a,x_1)=\{\{u_1,v_{11}\},\{v_{11},v_{12}\}\}$,
  $N_2(a,x_2)=\{\{v_{2}\}\}$, $N_1(b,y)=N_2(b,z)=\emptyset$ for
  $(b,y)\neq(a,x_1)$, $(b,z)\neq(a,x_2)$,
  $I_1(p)=\{x_1,v_{11},v_{12}\}$, $I_2(p)=\{x_2,v_2\}$, and
  $I_1(q)=I_2(q)=\emptyset$ for $q\neq p$. Then
  $S=\{(x_1,x_2),(u_1,u_2),(v_{11},v_2),(v_{12},v_2)\}$ is a monotone
  bisimulation, so~$x_1$ and~$x_2$ satisfy the same formulae in the
  monotone $\mu$-calculus. However,~$x_2$ satisfies $\boxtimes p$
  because~$x_2$ is contained in a submodel with set $\{x_2,v_2\}$ of
  states, while~$x_1$ does not satisfy~$\boxtimes p$.
\end{rem}

\section{Formula Tracking}\label{sec:tracking}

A basic problem in the construction of models in fixpoint
logics is to avoid infinite unfolding of least fixpoints, also known
as infinite deferral. To this end, we use a \emph{tracking function}
to follow formulae along paths in prospective models.
For unrestricted
$\mu$-calculi, infinite unfolding of least
fixpoints is typically detected by means of a \emph{parity condition}
on tracked formulae. However, since we restrict to alternation-free formulae, we can instead use a \emph{Büchi condition} to
detect (and then reject) such infinite unfoldings by sequential \emph{focussing} on sets of formulae in the spirit of focus games~\cite{LangeStirling01}. 

We fix \emph{closed, clean, irredundant formulae $\target$
and $\assume$}, aiming to check $\assume$-satisfiability of
$\target$; we also require both $\target$ and $\assume$ 
to be alternation-free. 
We put $\clos=\clo{\target}\cup\clo{\assume}$ and $n=|\clos|$.

Next we formalize our notion of deferred formulae
that originate from least fixpoints;
these are the formulae for which infinite unfolding has to be avoided.
\begin{defn}[Deferrals]
  A formula $\phi\in\clos(\psi)$ is a \emph{deferral} if there is a
  subformula $\chi$ of $\psi$ such that $\mathsf{FV}(\chi)$ contains a
  least fixpoint variable and $\phi=\clf{\chi}$. We put
  $\mathsf{dfr}= \{\phi\in\clos\mid \phi \text{ is a deferral}\}$.
\end{defn}
Since we assume formulas to be alternation-free, no formula $\nu X.\,\phi$
is a deferral.
\begin{example}
For $\psi=\mu X.\, \boxmod{b}\top\vee\diamod{a}X$, the formula
$\boxmod{b}\top\vee\diamod{a}\psi=(\boxmod{b}\top\vee\diamod{a}X)[\psi/X]\in\clo{\psi}$ is
a deferral since $X$ is a least fixpoint variable and
occurs free in $\boxmod{b}\top\vee\diamod{a}X$; the formula $\boxmod{b}\top$ on the
other hand is not a deferral since it cannot be obtained
from a subformula of $\psi$ by replacing least fixpoint variables with their binding fixpoint formulae.
\end{example}

\noindent We proceed to define a tracking function that
nondeterministically tracks deferrals along paths in neighbourhood
models.

\begin{defn}[Tracking function]
  We define an alphabet $\Sigma=\Sigma_p\cup\Sigma_m$ for traversing
  the closure, separating propositional and modal traversal, by
\begin{align*}
\Sigma_p&=\{(\phi_0\wedge\phi_1,0),(\phi_0\vee\phi_1,b),(\eta X.\phi_0,0)\mid\\
&\qquad\qquad\qquad\qquad\qquad\qquad\qquad
\phi_0,\phi_1\in\clos,\eta\in\{\mu,\nu\},X\in \mathsf{V},b\in\{0,1\}\}\\
\Sigma_m&=\{(\diamod{a}\phi_0,\boxmod{a}\phi_1)\mid
\phi_0,\phi_1\in\clos,a\in \mathsf{A}\}.
\end{align*}
The \emph{tracking function} $\delta:\dfr\times\Sigma\to \Pow(\dfr)$
is defined by $\delta(\mathsf{foc},(\chi,b))=\{\mathsf{foc}\}$ for
$\mathsf{foc}\in \dfr$ and $(\chi,b)\in\Sigma_p$ with
$\chi\neq\mathsf{foc}$, and by
\begin{align*}
\delta(\mathsf{foc},(\mathsf{foc},b))&=
   \begin{cases}
     \{\phi_0,\phi_1\}\cap\dfr &   \mathsf{foc}=\phi_0\wedge\phi_1\\ 
     \{\phi_b\}\cap\dfr & \mathsf{foc}=\phi_0\vee\phi_1 \\
     \{\phi_0[\theta(X)/X]\} & \mathsf{foc}=\mu X.\,\phi_0
   \end{cases}\\
\delta(\mathsf{foc},(\diamod{a}\phi_0,\boxmod{a}\phi_1))&=
   \begin{cases}
     \{\phi_0\}\hspace{4.3em} & \mathsf{foc}=\diamod{a}\phi_0\\
     \{\phi_1\} & \mathsf{foc}=\boxmod{a}\phi_1,
   \end{cases}
\end{align*}
noting that $\mu X.\,\phi_0\in\dfr$ implies
$\phi_0[\theta(X)/X]\in\dfr$.
We extend $\delta$ to sets $\mathsf{Foc}\subseteq \dfr$ by putting 
$\delta(\mathsf{Foc},l)=\bigcup_{\mathsf{foc}\in\mathsf{Foc}}
\delta(\mathsf{foc},l)$ for $l\in\Sigma$. We also
extend $\delta$ to words in the obvious way; e.g.\
we have $\delta(\mathsf{foc},w)=[a]\mathsf{foc}$ for
\begin{equation*}
\mathsf{foc}=\mu X.\,(p\vee(\langle b\rangle p\wedge[a]X))  
\quad w=(\mathsf{foc},0),(p\vee(\langle b\rangle p\wedge[a]\mathsf{foc}),1),
(\langle b\rangle p\wedge[a]\mathsf{foc},0).
\end{equation*}

\end{defn}

\begin{remark}
  Formula tracking can be modularized in an elegant way by using
  tracking automata to accept exactly the paths that contain infinite
  deferral of some least fixpoint formula.  While tracking automata
  for the full $\mu$-calculus are in general nondeterministic parity
  automata~\cite{FriedmannLange13a}, the automata for alternation-free
  formulae are nondeterministic co-B\"uchi automata.  Indeed, our
  tracking function~$\delta$ can be seen as the transition function of
  a nondeterministic co-B\"uchi automaton with set $\dfr$ of states
  and alphabet $\Sigma$ in which all states are accepting.  Since
  models have to be constructed from paths that do \emph{not} contain
  infinite deferral of some least fixpoint formula, the tracking
  automata have to be complemented in an intermediate step (which is
  complicated by nondeterminism) when checking satisfiability.  In our
  setting, the complemented automata are deterministic B\"uchi
  automata obtained using a variant of the Miyano-Hayashi
  construction~\cite{MiyanoHayashi1984} similarly as in our previous
  work on \ExpTime satisfiability checking in alternation-free
  coalgebraic $\mu$-calculi~\cite{HausmannEA16}; for brevity, this
  determinization procedure remains implicit in our satisfiability
  games (see Definition~\ref{defn:satgames} below).
\end{remark}

\noindent To establish our game characterization of satisfiability, 
we combine the tracking function $\delta$ with a function
for propositional transformation of formula sets guided by letters
from~$\Sigma_p$, embodied in the function~$\gamma$ defined next.

\begin{defn}[Propositional transformation]
  We define $\gamma:\Pow(\clos)\times \Sigma_p\to\Pow(\clos)$ on
  $\Gamma\subseteq \clos$, $(\chi,b)\in\Sigma_p$ by
  $\gamma(\Gamma,(\chi,b))=\Gamma$ if $\chi\notin\Gamma$, and
  \begin{equation*}
    \gamma(\Gamma,(\chi,b))= (\Gamma\setminus{\chi})\cup
       \begin{cases}
         \{\phi_b\} &  \chi=\phi_0\vee\phi_1\\
         \{\phi_0,\phi_1\} & \chi=\phi_0\wedge\phi_1\\
          \{\phi_0[\eta X.\,\phi_0/X]\} & \chi=\eta X.\,\phi_0
       \end{cases}
     \end{equation*}
     if $\chi\in\Gamma$. We extend $\gamma$ to words over $\Sigma_p$
     in the obvious way.
\end{defn}
\begin{expl}
  Take $\phi=\chi\vee\nu X.\,(\psi_1\wedge\psi_2)$ and
  $w=(\chi\vee\nu X.\,(\psi_1\wedge\psi_2),1),(\nu
  X.(\psi_1\wedge\psi_2),0),((\psi_1\wedge\psi_2)[\theta(X)/X],1)$. The letter
  $(\chi\vee\nu X.\,(\psi_1\wedge\psi_2),1)$ picks the right disjunct
  and $(\nu X.\,(\psi_1\wedge\psi_2),0)$ passes through the fixpoint
  operator $\nu X$ to reach $\psi_1\wedge\psi_2$; the letter
  $(\psi_1\wedge\psi_2,1)$ picks both conjuncts. Thus,
  \begin{equation*}
    \gamma(\{\phi\},w)=\{\psi_1[\theta(X)/X],\psi_2[\theta(X)/X]\}.
  \end{equation*}
\end{expl}

\section{Tableaux}\label{sec:tableaux}

As a stepping stone between neighbourhood models and satisfiability
games, we now introduce \emph{tableaux} which are built using a
variant of the standard \emph{tableau rules}~\cite{FriedmannLange13a}
(each consisting of one \emph{premise} and a possibly empty set of
\emph{conclusions}), where the \emph{modal rule}~$(\diamod{a})$
reflects Lemma~\ref{lem:rule}, taking into account the global
assumption~$\assume$:
\begin{align*}
  (\bot)\quad & \;\;\;\;\quad\quad\infrule{\Gamma,\bot}
                {}
  &  (\lightning)\quad & \quad\quad\infrule{\Gamma,p,\overline{p}}
                         {}
  &
    (\wedge)\quad & \;\quad\quad\infrule{\Gamma,\phi_1\wedge \phi_2}
                    {\Gamma,\phi_1,\phi_2}
  \\[5pt] (\vee) \quad & \quad\infrule{\Gamma,\phi_1\vee \phi_2}
                         {{\Gamma,\phi_1}\qquad {\Gamma,\phi_2}}
  &
    (\diamod{a}) \quad & \infrule{\Gamma,\diamod{a} \phi_1,\boxmod{a}\phi_2}
                         {\phi_1,\phi_2,\assume}
  &(\eta) \quad &\infrule{\Gamma,\eta X.\, \phi}
                  {\Gamma,\phi [\eta X.\, \phi/X]}
\end{align*}
(for $a\in \mathsf{A}$, $p\in \mathsf{P}$); we usually write rule
instances with premise $\Gamma$ and conclusion
$\Theta=\Gamma_1,\ldots,\Gamma_n$ ($n\le 2$) inline as
$(\Gamma/\Theta)$. Looking back at Section~\ref{sec:tracking}, we see
that letters in $\Sigma$ designate rule applications, e.g. the letter
$(\diamod{a}\phi_1,\boxmod{a}\phi_2)\in\Sigma_m$ indicates application
of~$(\diamod{a})$ to formulae $\diamod{a}\phi_1$, $\boxmod{a}\phi_2$.

\begin{definition}[Tableaux]
  Let $\mathsf{states}$ denote the set of \emph{(formal) states},
  i.e.\ sets $\Gamma\subseteq\clos$ such that $\bot\notin\Gamma$,
  $\{p,\overline p\}\not\subseteq\Gamma$ for all $p\in\mathsf{P}$, and
  such $\Gamma$ does not contain formulae that contain
  top-level occurrences of the operators $\wedge$, $\vee$, or $\eta X$.
  

  A \emph{pre-tableau} is a directed graph $(W,L)$, consisting of a
  finite set~$W$ of nodes labelled with subsets of $\clos$ by a
  labeling function $l:W\to\Pow(\clos)$, and of a relation
  $L\subseteq W\times W$ such that for all nodes~$v\in W$ 
  with label $l(v)=\Gamma\in\mathsf{states}$
  and all
  applications $(\Gamma/\Gamma_1,\ldots,\Gamma_n)$ of a tableau rule
  to $\Gamma$, there is an edge $(v,w)\in L$ such
  that $l(w)=\Gamma_i$ for some $1\leq i\leq n$.
  For nodes with label $l(v)=\Gamma\notin\mathsf{states}$, we require
  that there is exactly one node $w\in W$ such that $(v,w)\in L$;
  then we demand that there is an application $(\Gamma/\Gamma_1,\ldots,\Gamma_n)$ of a non-modal rule to $\Gamma$ such that
  $l(w)=\Gamma_i$ for some $1\leq i\leq n$.
  Finite or infinite words over $\Sigma$ then encode sequences of rule
  applications (and choices of conclusions for disjunctions). That is,
  given a starting node $v$, they encode \emph{branches} with root
  $v$, i.e.\ (finite or infinite) paths through $(W,L)$ that start
  at~$v$.  
  
  Deferrals are tracked along rule applications by means of the
  function $\delta$. E.g. for a deferral $\phi_0\vee\phi_1$, the
  letter $l=(\phi_0\vee\phi_1,0)$ identifies application of~$(\vee)$
  to $\{\phi_0\vee\phi_1\}$, and the choice of the left disjunct; then
  $\phi_0\vee\phi_1$ is tracked from a node with label
  $\Gamma\cup\{\phi_0\vee\phi_1\}$ to a successor node with label
  $\Gamma\cup\{\phi_0\}$ if $\phi_0\in\delta(\phi_0\vee\phi_1,l)$,
  that is, if~$\phi_0$ is a deferral.  A \emph{trace} (of $\phi_0$)
  along a branch with root~$v$ (whose label contains~$\phi_0$) encoded
  by a word~$w$ is a (finite or infinite) sequence
  $t=\phi_0,\phi_1\ldots$ of formulae such that
  $\phi_{i+1}\in\delta(\phi_i,w(i))$.  A \emph{tableau} is a finite
  pre-tableau in which all traces are finite, and a tableau is a
  \emph{tableau for $\target$} if some node label contains~$\target$.

  Given a tableau $(W,L)$ and a node $v\in W$, let $\mathsf{tab}(v)$
  (for \emph{tableau timeout}) denote the least number $m$ such that
  for all formulae $\phi$ in $l(v)$ and all branches of $(W,L)$ that
  are rooted at $v$, all traces of $\phi$ along the branch have length
  at most~$m$; such an~$m$ always exists by the definition of
  tableaux.
\end{definition}
\noindent To link models and tableaux, we next define an inductive
measure on unfolding of least fixpoint formulae in models.

\begin{defn}[Extension under timeouts]\label{defn:extto}
  Let $k$ be the greatest index of any least fixpoint variable
  in~$\psi$, and let $(W,N,I)$ be a finite neighbourhood model.  Then
  a \emph{timeout} is a vector $\overline{m}=(m_1,\ldots,m_k)$ of
  natural numbers $m_i\leq |W|$.  For $1\leq i\leq k$ such that
  $m_i>0$, we put
  $\overline{m}@i=(m_1,\ldots,m_{i-1},m_i-1,|W|,\ldots,|W|)$. Then
  $\overline{m}>_l\overline{m}@i$, where $>_l$ denotes lexicographic
  ordering.  For $\phi\in\clos$, we inductively define the
  \emph{extension $\sem{\phi}_{\overline{m}}$ under timeout
    $\overline{m}$} by $\sem{\phi}_{\overline{m}}=\sem{\phi}$
    for $\phi\notin\dfr$ and, for $\phi\in\dfr$,
    by
\begin{align*}
\sem{\psi_0\wedge\psi_1}_{\overline{m}} & = \sem{\psi_0}_{\overline{m}}\cap\sem{\psi_1}_{\overline{m}} \\
\sem{\psi_0\vee\psi_1}_{\overline{m}} & = \sem{\psi_0}_{\overline{m}}\cup\sem{\psi_1}_{\overline{m}}\\
\sem{\diamod{a}\psi}_{\overline{m}} & = \{w\in W\mid \exists S\in N(a,w).\,S\subseteq
\sem{\psi}_{\overline{m}}\}\\
\sem{\boxmod{a}\psi}_{\overline{m}} & = \{w\in W\mid 
\forall S\in N(a,w).\,S\cap\sem{\psi}_{\overline{m}}\neq \emptyset\}\\
\sem{\mu X.\psi}_{\overline{m}}&=\begin{cases}
\emptyset & m_{\mathsf{idx}(X)}=0\\
\sem{\psi[\mu X.\psi/X]}_{\overline{m}@\mathsf{idx}(X)} & m_{\mathsf{idx}(X)}>0
\end{cases}
\end{align*}
using the lexicographic ordering on $(\overline{m},\phi)$ as the
termination measure; crucially, unfolding least fixpoints reduces the
timeout. We extend this definition to sets of formulae by 
$\sem{\Psi}_{\overline{m}}=
\bigcap_{\phi\in\Psi} \sem{\phi}_{\overline{m}}$ for $\Psi\subseteq \cl$.

\end{defn}

\begin{lem}\label{lem:universal-timeouts}
In finite neighbourhood models, we have that 
for all $\psi\in\clos$ there is some timeout $\overline{m}$ such that
\begin{align*}
\sem{\psi}\subseteq
\sem{\psi}_{\overline{m}}.
\end{align*}
\end{lem}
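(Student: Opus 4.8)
The plan is to prove the stronger statement that the \emph{maximal} timeout $\overline m=(|W|,\dots,|W|)$ works for all of $\sem\psi$ at once, and in fact that $\sem\psi_{\overline m}=\sem\psi$ for this choice. Two preliminary observations reduce the lemma to this. First, an easy induction along the well-founded order underlying Definition~\ref{defn:extto} (lexicographic on timeouts, then the subterm order on $\clos$) shows $\sem\phi_{\overline m}\subseteq\sem\phi$ for every $\phi\in\clos$ and every timeout $\overline m$: for $\phi\notin\dfr$ this is an equality by definition, and for $\phi\in\dfr$ the connectives occurring ($\wedge,\vee,\diamod a,\boxmod a$) are all monotone, while the clause for $\mu X.\phi_0$ replaces a least fixpoint by finitely many unfoldings, hence by a subset. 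Second, by the same kind of induction $\sem\phi_{\overline m}$ is monotone in $\overline m$ componentwise (using that $\overline m\le\overline m'$ implies $\overline{m}@i\le\overline{m}'@i$). Hence it suffices to prove the reverse inclusion $\sem\psi\subseteq\sem\psi_{\overline m}$ for $\overline m=(|W|,\dots,|W|)$, i.e.\ $\sem\psi_{\overline m}=\sem\psi$.

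For this I would argue by induction along the well-founded order of Definition~\ref{defn:extto}, with an induction hypothesis strengthened to describe $\sem\phi_{\overline m}$ for \emph{arbitrary} $\overline m$: namely that $\sem\phi_{\overline m}$ equals the extension of $\phi$ in which every ``topmost'' least-fixpoint subformula $\mu X.\phi_0$ of index $\ell(\phi):=\min\{\mathsf{idx}(X)\mid\mu X.\phi_0\text{ occurs in }\phi\}$ is replaced by its $m_{\ell(\phi)}$-th Kleene approximant (with the analogous replacement applied, recursively, to the bodies). Taking $\overline m$ maximal then forces every such approximant, in the size-$|W|$ model $W$, to coincide with the actual least fixpoint by Kleene's fixpoint theorem, giving $\sem\phi_{\overline m}=\sem\phi$. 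The Boolean and modal cases are immediate from the induction hypothesis on the strictly smaller immediate subformulae. The interesting case is $\phi=\mu X.\phi_0$ with $i=\mathsf{idx}(X)$: since $m_i>0$ (after the reduction $m_i=|W|$), we have $\sem\phi_{\overline m}=\sem{\phi_0[\mu X.\phi_0/X]}_{\overline m@i}$, and iterating the unfolding decreases the $i$-th component and resets all higher components to $|W|$, until the $i$-th component reaches $0$ and the relevant occurrences of $\mu X.\phi_0$ evaluate to $\emptyset$; so the $m_i$-fold unfolding computes precisely $(\sem{\phi_0}^X_\sigma)^{m_i}(\emptyset)$.

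The main obstacle is exactly the bookkeeping hidden in that last sentence: one must check that the resetting discipline of Definition~\ref{defn:extto} faithfully reproduces nested Kleene iteration, i.e.\ that whenever the recursion (re-)enters an inner least fixpoint of index $j>i$, its budget component $m_j$ has just been reset to $|W|$ and has not been touched since. This uses that $\phi$ is clean and alternation-free: every least fixpoint variable whose binder lies inside $\mu X.\phi_0$ has index strictly greater than $i$, and any two \emph{distinct} least fixpoint variables of equal index bind disjoint subformulae (neither binder is in the scope of the other), so no two fixpoints of the same index can be unfolded along a single path of the recursion; hence each inner fixpoint is always entered with a full budget and, by the induction hypothesis together with Kleene's theorem, is evaluated to its exact extension. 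Once this is in place, the $i$-component of $\overline m$ literally counts unfoldings of $\mu X.\phi_0$, and with $m_i=|W|$ the computation yields $(\sem{\phi_0}^X_\sigma)^{|W|}(\emptyset)=\LFP\sem{\phi_0}^X_\sigma=\sem{\mu X.\phi_0}$, closing the induction and the proof.
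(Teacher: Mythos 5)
Your overall route is the same as the paper's: relate the timeout semantics to Kleene approximants and observe that on a finite model the full budget $(|W|,\ldots,|W|)$ yields the true least fixpoints (your preliminary observations $\sem{\phi}_{\overline m}\subseteq\sem{\phi}$ and monotonicity in $\overline m$ are correct but not needed for the statement). The problem is in the step you yourself identify as the crux: both structural claims you use to discharge it are false as stated. It is not true that every least fixpoint variable whose binder lies inside $\mu X.\,\phi_0$ has index strictly greater than $\mathsf{idx}(X)$, nor that distinct variables of equal index have non-nested binders: the index only counts $\geq_\mu$-predecessors, i.e.\ enclosing $\mu$-binders with \emph{no intervening $\nu$}. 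For instance, $\mu X.\,\diamod{a}\bigl(X\vee\nu W.(\diamod{b}W\wedge\mu Y.(p\vee\diamod{c}Y))\bigr)$ is clean, guarded, irredundant and alternation-free, yet $\mathsf{idx}(X)=\mathsf{idx}(Y)=1$ although $\theta(Y)$ lies inside $\theta(X)$. So your justification that each inner fixpoint is always entered with a fresh budget does not follow from what you wrote.

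The conclusion you need is nevertheless true, but for a reason you have not supplied, and it is exactly where alternation-freeness must do real work: the timeout semantics recurses only through deferrals, and a $\nu$-formula separating two same-index $\mu$-binders is never a deferral, so the recursion bottoms out at the plain extension $\sem{\cdot}$ before it can reach the inner fixpoint with a partially spent budget; within deferrals, strictly nested $\mu$-binders do have strictly increasing index by the definition of $\geq_\mu$. The paper packages this precisely, and avoids your under-specified exact characterization of $\sem{\phi}_{\overline m}$ (note also that a deferral can have topmost $\mu$-subformulae of \emph{different} indices, which your clause for $\ell(\phi)$ does not cover), by proving only an inclusion: for subformulae $\phi$ of the body $\psi_0$ and any $\overline m$ with $m_i=|W|$ for every $i$ exceeding the indices of the free variables of $\phi$, it shows $\sem{\phi}_{\sigma(\overline m)}\subseteq\sem{\clf{\phi}}_{\overline m}$, where $\sigma(\overline m)$ maps each free $Y$ to the $m_{\mathsf{idx}(Y)}$-fold Kleene approximant of $\theta(Y)$; the side condition on $\overline m$ is the precise form of ``inner fixpoints start with a full budget'', and the case $\clf{\phi}\notin\dfr$ is where alternation-freeness enters (it forces $\phi$ to be closed, so the ordinary semantics takes over). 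To repair your argument, either prove this inclusion, or restate your induction hypothesis with an analogous side condition on $\overline m$ and justify the reset bookkeeping via the deferral cut-off just described.
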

\begin{proof}
 Let $W$ be the set of states.  Since
  $\sem{\psi}_{\overline{m}}$ is defined like $\sem{\psi}$ in all
  cases but one, we only need to consider the inductive case where
  $\psi=\mu X.\psi_0\in\clos$.  We show that for all
  \emph{subformulae}~$\phi$ of~$\psi_0$, for all timeout vectors
  $\overline{m}=(m_1,\ldots,m_k)$ such that $m_i=|W|$ for all~$i$ such
  that~$\phi$ does not contain a free variable with index at
  least~$i$, we have
\begin{equation}\label{eq:subform-to}
\sem{\phi}_{\sigma(\overline{m})}\subseteq\sem{\clf{\phi}}_{\overline{m}},
\end{equation}
where~$\sigma(\overline{m})$ maps each $Y\in\mathsf{FV}(\phi)$ to
$(\sem{\phi_1}^Y_{\sigma(\overline{m})})^{m_{\mathsf{idx}(Y)}}(\emptyset)$
where $\theta(Y)=\mu Y.\,\phi_1$ (this is a recursive definition of
$\sigma(\overline{m})$ since the value of $\sigma(\overline{m})$
for~$Y$ is overwritten in
$(\sem{\phi_1}^Y_{\sigma(\overline{m})})^{m_{\mathsf{idx}(Y)}}(\emptyset)$
so that this set depends only on values~$m_i$ such that
$i<\mathsf{idx}(Y)$). This shows that the claimed property
$\sem{\mu X.\,\psi_0}\subseteq \sem{\mu X.\,\psi_0}_{\overline{m}}$ holds
for $\overline{m}=(|W|,\ldots,|W|)$: By Kleene's fixpoint theorem,
\begin{align*}
\sem{\mu
  X.\psi_0}&=(\sem{\psi_0}^X)^{|W|}(\emptyset)=\sem{\psi_0}_{[X\mapsto(\sem{\psi_0}^X)^{|W|-1}(\emptyset)]} = \sem{\psi_0}_{\sigma(\overline
  m@\mathsf{idx}(X))}\\
&\subseteq
\sem{\clf{\psi_0}}_{\overline{m}@\mathsf{idx(X)}}= \sem{\mu
  X.\psi_0}_{\overline{m}},
  \end{align*}
   using~\eqref{eq:subform-to} in the
second-to-last step.

The proof of~\eqref{eq:subform-to} is by induction on $\phi$; we do
only the non-trivial cases. 
If $\clf{\phi}\notin\dfr$, then $\phi$ is closed (since $\psi_0$ does
not contain a free greatest fixpoint variable and since $\phi$ is not
in the scope of a greatest fixpoint operator within~$\psi_0$, that is,
no free greatest fixpoint variable is introduced during the induction).
Hence we have
$\sem{\phi}_{\sigma(\overline{m})}=\sem{\phi}$ and
$\sem{\clf{\phi}}_{\overline{m}}=\sem{\clf{\phi}}=\sem{\phi}$ so that we are
done.  If $\phi = Y$, then
$\sem{Y}_{\sigma(\overline{m})}=(\sem{\phi_1}^Y_{\sigma(\overline{m})})^{m_{\mathsf{idx}(Y)}}(\emptyset)$
where $\theta(Y)=\mu Y.\,\phi_1$.  If $m_{\mathsf{idx}(Y)}=0$, then
$(\sem{\phi_1}^Y_{\sigma(\overline{m})})^{m_{\mathsf{idx}(Y)}}(\emptyset)=\emptyset$
so there is nothing to show.  If $m_{\mathsf{idx}(Y)}>0$, then
\begin{align*}
(\sem{\phi_1}^Y_{\sigma(\overline{m})})^{m_{\mathsf{idx}(Y)}}(\emptyset)&=
\sem{\phi_1}^Y_{\sigma(\overline{m})}((\sem{\phi_1}^Y_{\sigma(\overline{m})})^{m_{\mathsf{idx}(Y)-1}}(\emptyset))\\
&=\sem{\phi_1}_{({\sigma(\overline{m})})[Y\mapsto (\sem{\phi_1}^Y_\sigma)^{m_{\mathsf{idx}(Y)}-1}(\emptyset)]}\\
&\subseteq
\sem{\clf{\phi_1}}_{\overline{m}@\mathsf{idx}(Y)}\\
&=
\sem{\clf{\phi_1[\mu Y.\,\phi_1/Y]}}_{\overline{m}@\mathsf{idx}(Y)}\\
&=
\sem{\clf{\mu Y.\,\phi_1}}_{\overline{m}}
=\sem{\clf{Y}}_{\overline{m}},
\end{align*}
where the inclusion is by the inductive hypothesis.
If $\phi=\mu Y.\,\phi_1$, then 
\begin{align*}
\sem{\mu Y.\,\phi_1}_{\sigma(\overline{m})}=(\sem{\phi_1}_{\sigma(\overline{m})}^Y)^{|W|}(\emptyset)
&=\sem{\phi_1}^Y_{\sigma(\overline{m})}((\sem{\phi_1}^Y_{\sigma(\overline{m})})^{|W|-1}(\emptyset))\\
&=\sem{\phi_1}_{({\sigma(\overline{m})})[Y\mapsto(\sem{\phi_1}^Y)^{|W|-1}(\emptyset)]}\\
&\subseteq\sem{\clf{\phi_1}}_{\overline{m}@\mathsf{idx}(Y)}\\
&=\sem{\clf{\phi_1[\theta(Y)/Y]}}_{\overline{m}@\mathsf{idx}(Y)}\\
&=\sem{\clf{\mu Y.\,\phi_1}}_{\overline{m}},
\end{align*}
where the first equality is by Kleene's fixpoint theorem and
the inclusion is by the inductive hypothesis since $m_{\mathsf{idx}(Y)}=|W|$
(hence $m@\mathsf{idx}(Y)_{\mathsf{idx}(Y)}=|W|-1$) by assumption as $\mu Y.\,\phi_1$ does not contain a free variable
with index at least $\mathsf{idx}(Y)$.\qed
\end{proof}

\noindent To check satisfiability it suffices to decide whether a tableau
exists:

\begin{theorem}\label{thm:sattab}
  There is a tableau for $\target$ if and only if $\target$ is
  $\assume$-satisfiable.
\end{theorem}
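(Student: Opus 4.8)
The plan is to prove the two implications separately, feeding the ``only if'' direction with the finite model property (Lemma~\ref{lem:fin-model}) and controlling least fixpoints in both directions with the timeout machinery of Definition~\ref{defn:extto} and Lemma~\ref{lem:universal-timeouts}.

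\emph{From satisfiability to a tableau.} Suppose $\target$ is $\assume$-satisfiable; by Lemma~\ref{lem:fin-model} fix a \emph{finite} $\assume$-model $F=(W,N,I)$ with $x_0\in\sem{\target}$. I would unravel $F$ into a pre-tableau whose nodes carry a label $\Gamma\subseteq\clos$ together with a witness $w\in W$ with $w\in\sem{\Gamma}$, starting from $(x_0,\{\target\})$ and identifying nodes up to their (state,\,label) pair so that the structure is finite. A non-state label is decomposed by the matching non-modal rule, keeping the witness and picking, for $(\vee)$, a disjunct that remains true at $w$; by guardedness every such run of non-modal steps terminates. At a state label $\Gamma\in\mathsf{states}$ one applies $(\diamod a)$ to each $\diamod a\phi_1,\boxmod a\phi_2\in\Gamma$: from $w\in\sem{\diamod a\phi_1}$ pick $S\in N(a,w)$ with $S\subseteq\sem{\phi_1}$ and from $w\in\sem{\boxmod a\phi_2}$ pick $v\in S\cap\sem{\phi_2}$; as $\sem{\assume}=W$, this $v$ witnesses the conclusion $\{\phi_1,\phi_2,\assume\}$ (essentially Lemma~\ref{lem:rule}). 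This yields a finite pre-tableau containing $\target$, and the only thing left to verify is that all traces are finite. Here the timeouts do the work: along a branch the witnesses $w_0,w_1,\dots$ realise the labels, so a trace starting at a deferral $\phi_0\in l(v_0)$ has $w_0\in\sem{\phi_0}$, hence $w_0\in\sem{\phi_0}_{\overline{m}_0}$ for some timeout $\overline{m}_0$ by Lemma~\ref{lem:universal-timeouts}; one then wants to accompany the trace by timeouts $\overline{m}_i$ with $w_i\in\sem{\phi_i}_{\overline{m}_i}$ such that the pair $(\overline{m}_i,|\phi_i|)$ decreases lexicographically whenever the tracked formula changes --- unfolding $\mu X$ replaces $\overline{m}_i$ by $\overline{m}_i@\mathsf{idx}(X)$ and so strictly shrinks the timeout, while selecting a conjunct or disjunct, or crossing a modality, shrinks the formula without enlarging the timeout --- and, since by guardedness only finitely many non-modal rules intervene between modal steps and a modal step changes (or kills) the tracked formula, no trace can be infinite. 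Making this precise forces the pre-tableau construction to choose disjuncts and modal witnesses in harmony with the \emph{timeout} semantics rather than just the ordinary one; I would arrange this by carrying throughout the stronger invariant that each witness lies in $\sem{\Gamma}_{\overline{m}^*}$ for the uniform timeout $\overline{m}^*=(|W|,\dots,|W|)$, which is what the proof of Lemma~\ref{lem:universal-timeouts} in fact establishes ($\sem{\psi}\subseteq\sem{\psi}_{\overline{m}^*}$ for all $\psi\in\clos$), and by using monotonicity of $\sem{\cdot}_{\overline{m}}$ in $\overline{m}$ to re-establish it after unfoldings. This coupling between the branch choices and a well-founded timeout measure is, in my view, the main obstacle in this direction.

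\emph{From a tableau to a model.} Conversely, let $(W,L)$ be a tableau with $\{\target,\assume\}\subseteq l(v_0)$ for some node $v_0$. First, no node reachable from $v_0$ carries $\bot$ or a clash $\{p,\overline p\}$: otherwise, after decomposing all Boolean and fixpoint connectives (which terminates by guardedness, so there are also no cycles of non-state nodes) one would reach a non-state node to which only $(\bot)$ or $(\lightning)$ applies, and these rules have no conclusion, contradicting the pre-tableau condition. Hence each node $v$ reaches a unique state node $\mathsf{st}(v)$ along non-modal successors, and I take $\hat F$ to have the reachable state nodes as states, with $\hat I(p)=\{u\mid\overline p\notin l(u)\}$ and, for a state node $u$ with label $\Gamma$ and $a\in\mathsf A$, with $a$-neighbourhoods $S_{\phi_1}=\{\mathsf{st}(w)\mid w\text{ is the }(\diamod a)\text{-successor of }u\text{ for }\diamod a\phi_1\text{ and some }\boxmod a\phi_2\in\Gamma\}$, one per $\diamod a\phi_1\in\Gamma$ (so the empty neighbourhood when $\Gamma$ has a diamond but no matching box). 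By construction $S_{\phi_1}\subseteq\sem{\phi_1}$ and $S_{\phi_1}$ meets $\sem{\phi_2}$ for every $\boxmod a\phi_2\in\Gamma$, which handles the modal cases of the \emph{truth lemma}: $\mathsf{st}(v)\in\sem{\phi}$ in $\hat F$ for all reachable $v$ and $\phi\in l(v)$. The Boolean cases are routine; the fixpoint cases rely on all traces from $v$ being finite --- so least fixpoints in the labels are unfolded only boundedly often --- and should go through by induction on a measure combining the tableau timeout $\mathsf{tab}(v)$ with a formula measure that strictly decreases under every non-modal rule, including unfolding by guardedness (for instance the number of Boolean and fixpoint operators above the first modal operator). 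Checking that this measure is genuinely well-founded --- in particular that decomposing a \emph{non}-deferral can expose a deferral without breaking the descent --- is the main obstacle here. Finally, since $(\diamod a)$ re-injects $\assume$ into every conclusion and $\assume\in l(v_0)$, the truth lemma gives $\sem{\assume}=\hat W$, so $\hat F$ is a $\assume$-model with $\mathsf{st}(v_0)\in\sem{\target}$, witnessing the $\assume$-satisfiability of $\target$.
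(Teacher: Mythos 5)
Your overall route is the paper's: the forward direction uses the finite model property and unravels a finite $\assume$-model into a pre-tableau whose branch choices are disciplined by the timeout semantics, and the backward direction reads a neighbourhood model off the formal states (one $a$-neighbourhood per diamond, one element per box) and proves a truth lemma. However, there are two genuine gaps. First, in the forward direction the mechanism you propose for coupling the construction to timeouts does not work as stated. Maintaining only the invariant that every witness lies in $\sem{\Gamma}_{\overline{m}^*}$ for the uniform timeout $\overline{m}^*=(|W|,\dots,|W|)$, and restoring it after unfoldings by monotonicity, does not control the \emph{per-trace} measure you need: when the tracked deferral $\phi_i$ is a disjunction decomposed at the current node, the disjunct chosen because it holds under $\overline{m}^*$ need not hold under the (smaller) timeout $\overline{m}_i$ currently accompanying the trace, so the pair $(\overline{m}_i,|\phi_i|)$ may be forced to increase and the well-foundedness argument collapses; the same issue arises at modal steps. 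The paper's construction resolves exactly this point by always taking the \emph{least} timeout $\overline{m}$ such that the principal formula holds at the witness (which exists by Lemma~\ref{lem:universal-timeouts}) and then choosing a disjunct, respectively a successor state $y\in\sem{\phi_0}_{\overline{m}}\cap\sem{\phi_1}_{\overline{m}}\cap\sem{\assume}$, that is true under that \emph{same} timeout. Minimality guarantees that along every trace the associated timeouts are non-increasing in the lexicographic order and strictly decrease at each $\mu$-unfolding, which is what rules out infinite traces; your uniform-timeout invariant is too weak to yield this.

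Second, in the truth lemma of the backward direction you subsume both fixpoint cases under ``traces are finite'' and a terminating measure, but this only works for least fixpoints. Greatest fixpoint formulae are not deferrals, traces say nothing about them, and no decreasing measure can handle them since their semantic unfolding never terminates; the paper treats this case separately by coinduction, showing that $\psem{\nu X.\chi}$ is a postfixpoint of $\sem{\chi}^X$ (with an inner induction over subformulae of $\chi$, where alternation-freeness guarantees that any open fixpoint subformula encountered is again a $\nu$). For the least fixpoint case your instinct is right and matches the paper, which uses a lexicographic induction on the tableau timeout $\mathsf{tab}$ together with the number $\mathsf{u}$ of variables that can become unguarded and the formula length, and which moreover picks the modal successors in the model construction $\mathsf{tab}$-minimally so that crossing a modality strictly decreases the first component; this is also the answer to your worry about decomposition exposing new deferrals, since unfolding $\mu Y.\,\phi_1$ or passing to a variable strictly decreases $(\mathsf{u},\mathsf{len})$ while leaving $\mathsf{tab}$ unchanged. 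Without the minimal-timeout choices in the forward direction and the coinductive step for $\nu$ in the backward direction, the proposal does not go through.
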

\begin{proof}
Let $\target$ be $\assume$-satisfiable. Then there is 
a finite neighbourhood model $(W,N,I)$ such that $W\subseteq\sem{\assume}$
and $W\cap\sem{\target}\neq\emptyset$ by Lemma~\ref{lem:fin-model}.
We define a tableau over the set 
$V=\{(x,\Psi,w)\in W\times\Pow(\clos)\times \Sigma_p^*\mid x\in\sem{\Psi},\mathsf{u}(\Psi)\leq 3n-|w|\}$,
where $\mathsf{u}(\Psi)$ denotes the sum of the numbers of unguarded operators in formulae from $\Psi$. Let $(x,\Psi,w)\in V$. For
$\Psi\notin\mathsf{states}$,
$\mathsf{u}(\Psi)\leq 3n-|w|$, we pick some
$\phi\in\Psi$, distinguishing
cases. If $\phi=\phi_0\wedge\phi_1$ or $\phi=\eta X.\,\phi_0$,
then we put $b=0$. If $\phi=\phi_0\vee\phi_1$, then let $\overline{m}$ be the least timeout such that $x\in\sem{\phi}_{\overline{m}}$ (such $\overline{m}$ exists by Lemma~\ref{lem:universal-timeouts}). Then there
is some $b'\in\{0,1\}$ such that $x\in\sem{\phi_{b'}}_{\overline{m}}$ and
we put $b=b'$. In any case, we put $l=(\phi,b)$ and add an edge from $(x,\Psi,w)$ to $(x,\gamma(\Psi,l),(w,l))$
to $L$, having $\mathsf{u}(\gamma(\Psi,l))=\mathsf{u}(\Psi)-1$
and hence $\mathsf{u}(\gamma(\Psi,l))\leq 3n-|w,l|$.
Since we are interested in the nodes that can be reached from
nodes $(x,\Psi,\epsilon)$ with $|\Psi|\leq 3$ and since each formula
in $\Psi$ contains at most $n$ unguarded operators, we indeed only
have to construct nodes $(x,\Psi',w)$ with $\mathsf{u}(\Psi)\leq 3n-|w|$
before reaching state labeled nodes. For
$\Psi\in\mathsf{states}$ and
$ \{\diamod{a}\phi_0,\boxmod{a}\phi_1\}\subseteq \Psi$,
let $\overline{m}$ be the least timeout such that $x\in\sem{\{\diamod{a}\phi_0,\boxmod{a}\phi_1\}}_{\overline{m}}$ (again, such $\overline{m}$ exists by Lemma~\ref{lem:universal-timeouts}). Then there is 
$S\in N(a,x)$ such that $S\subseteq\sem{\phi_0}_{\overline{m}}\cap\sem{\assume}$ and $S\cap\sem{\phi_1}_{\overline{m}}\neq \emptyset$.
Pick $y\in\sem{\phi_0}_{\overline{m}}\cap\sem{\phi_1}_{\overline{m}}\cap\sem{\assume}$ and add an edge from $(x,\Psi,w)$ to
$(y,\{\phi_0,\phi_1,\assume\},\epsilon)$ to $L$, having
$\mathsf{u}(\{\phi_0,\phi_1,\assume\})\leq 3n - |\epsilon|$.
Define the label $l(x,\Psi,w)$ of nodes $(x,\Psi,w)$ to 
be just $\Psi$. Then the structure $(V,L)$ indeed is a tableau for 
$\target$:
Since there is $z\in W\cap\sem{\target}$, we have 
$(z,\{\target\},\epsilon)\in V$, that is, $\target$ is
contained in the label of some node in $(V,L)$. The requirements
of tableaux for matching rule applications are satisfied
by construction of $L$. It remains to show that each trace in $(V,L)$
is finite. So let $\tau=\phi_0,\phi_1,\ldots$ be a trace of some formula
$\phi_0$ along some branch (encoded by a word $w$) that is rooted
at some node $v=(x,\Psi,w')\in V$ such that $\phi_0\in\Psi$.
Let $\overline{m}$ be the least timeout such that $x\in\sem{\phi_0}_{\overline{m}}$ (again, such $\overline{m}$ exists by Lemma~\ref{lem:universal-timeouts}). For each $i$, we have $\phi_{i+1}\in\delta(\phi_i,w(i))$ and there are 
$(x_i,\Psi_i,w_i)\in V$ and $\overline{m}_i$ such that $\phi_i\in\Psi_i$
and $x_i\in\sem{\phi_i}_{\overline{m}_i}$. We have chosen disjuncts
and modal successors in a minimal fashion,
 that is, in such a way that we always
have $\overline{m}_{i+1}\leq_l \overline{m}_{i}$. Since traces can be
infinite only if they contain infinitely many unfolding steps for
some least fixpoint, $\tau$ is finite or some least fixpoint is unfolded
in it infinitely often. In the former case, we are done. 
In the latter case, each unfolding of a least fixpoint 
by Definition~\ref{defn:extto} reduces some digit of the timeout 
so that we have an infinite decreasing chain
$\overline{m}_{i_1}>_l\overline{m}_{i_2}>_l\ldots$ with $i_{j+1}>i_j$ for all $j$, which is a contradiction to $<_l$ being a well-order.

For the converse direction, let $(V,L)$ be a tableau for $\target$
labeled by some function $l:V\to\Pow(\clos)$.
We construct a model $(W,N,I)$ over the set 
\begin{align*}
W=\{x\in V \mid l(x)\in\mathsf{states}\}.
\end{align*}
For $p\in\mathsf{P}$, put $I(p)=\{x\in W\mid p\in l(x)\}$;
since $(V,L)$ is a tableau, the rules $(\bot)$ and $(\lightning)$
do not match the label of any node, so we have $I(p)=W\setminus I(\overline{p})$, as required. 
Let $x\in W$ and $a\in\mathsf{A}$.
If $l(x)$ contains no $a$-box literal, then we put
$N(a,x)=\{\emptyset\}$. 
If $l(x)$ contains some $a$-box literal but no
$a$-diamond literal, then we put
$N(a,x)=\emptyset$. 
Otherwise, let the $a$-modalities
in $l(x)$ be exactly $\diamod{a}\chi_1,\ldots,\diamod{a}\chi_o,
\boxmod{a}\psi_1,\ldots,\boxmod{a}\psi_m$. We put
$N(a,x)=\{\{y_{1,1},\ldots,y_{1,m}\},\ldots,\{y_{o,1},\ldots,y_{o,m}\}\}$,
where, for $1\leq i\leq o$, $1\leq j\leq m$,
the state $y_{i,j}$ is picked minimally with respect to
$\mathsf{tab}$ among all nodes $z$ such that $(x,z)\in L$ and
$\{\chi_i,\psi_j,\assume\}= l(z)$; such $y_{i,j}$ with minimal
tableau timeout
always exists since $(V,L)$ is a tableau. It remains
to show that $(W,N,I)$ is a $\assume$-model for $\target$. 
We put
\begin{equation*}
\psem{\phi}=\{x\in W\mid l(x)\vdash_\mathsf{PL}\phi\}
\end{equation*}
  for $\phi\in\clos$, where
  $\vdash_{\mathsf{PL}}$ denotes \emph{propositional entailment}
  (modal literals $\boxmod{a}\phi$, $\diamod{a}\phi$ are
  regarded as propositional atoms and~$\eta X.\,\phi$ and 
  $\phi[\eta X.\,\phi/X]$ entail each other).
Since we have $W\subseteq\psem{\assume}$ and
$W\cap\psem{\target}\neq\emptyset$ by definition of $(W,N,I)$ and
tableaux, it
suffices to show that we have
$\psem{\phi}\subseteq\sem{\phi}$
for all $\phi\in\clos$.
The proof of this is by induction on $\phi$, using
coinduction in the case for greatest fixpoint formulae
and a further induction on tableau timeouts
in the case for least fixpoint formulae.

\qed
\end{proof}

\section{Satisfiability Games}\label{sec:games}

We now define a game characterizing $\assume$-satisfiability of
$\target$. Player $\mathsf{Eloise}$ tries to establish the existence
of a tableau for $\target$ using only polynomially many supporting
points for her reasoning. To this end, sequences of propositional
reasoning steps are contracted into single
$\mathsf{Eloise}$-moves. Crucially, the limited branching of monotone
modalities (and the ensuing limited need for tracking of deferrals)
has a restricting effect on the nondeterminism that the game needs to
take care of.
 
\begin{defn}[Satisfiability games]\label{defn:satgames}
  We put $U=\{\Psi\subseteq \clos\mid 1\leq|\Psi|\leq 2\}$ (note
  $|U|\leq n^2$) and
  $Q=\{\mathsf{Foc}\subseteq \dfr\mid |\mathsf{Foc}|\leq 2\}$.  The
  \emph{$\assume$-satisfiability game for~$\target$} is the B\"uchi
  game $G=(V,E,v_0,F)$ with set $V=V_\exists\cup V_\forall$ of nodes
  (with the union made disjoint by markers omitted in the notation)
  where
  $V_\exists=\{(\Psi,\mathsf{Foc})\in U\times Q\mid
  \mathsf{Foc}\subseteq\Psi$\} and
  $V_\forall=\mathsf{states}\times Q$, with initial node
  $v_0=(\{\target\},\emptyset)\in V_\exists$, and with set
  $F=\{(\Psi,\mathsf{Foc})\in V_\exists\mid \mathsf{Foc}=\emptyset\}$
  of accepting nodes. 
  The set $E$ of moves is defined by 
\begin{align*}
E(\Psi,\mathsf{Foc})& =
\{(\gamma(\Psi\cup\{\assume\},w),\delta(\mathsf{Foc},w))\in \mathsf{states}\times Q\mid w\in(\Sigma_p)^*, |w|\leq 3n\}\\
E(\Gamma,\mathsf{Foc})& =
\{\,(\{\phi_0,\phi_1\},\mathsf{Foc}')\in \,  U\times Q \mid \, 
\{\diamod{a}\phi_0,\boxmod{a}\phi_1\}\subseteq \Gamma,\\
&
\hspace{6em}\text{if }\mathsf{Foc}\neq \emptyset \text{, then }\mathsf{Foc}'=\delta(\mathsf{Foc},(\diamod{a}\phi_0,\boxmod{a}\phi_1)),\\
&\hspace{6em}\text{if }\mathsf{Foc}=\emptyset \text{, then }\mathsf{Foc}'=\{\phi_0,\phi_1\}\cap\dfr\,
\}
\end{align*}
for $(\Psi,\mathsf{Foc})\in V_\exists$,
$(\Gamma,\mathsf{Foc})\in V_\forall$.
\end{defn}

\noindent Thus, \textsf{Eloise} steers the propositional evolution of
formula sets into formal states, keeping track of the focussed
formulae, while \textsf{Abelard} picks an application of
Lemma~\ref{lem:rule}, and resets the focus set after it is
\emph{finished}, i.e.\ becomes~$\emptyset$; \textsf{Eloise} wins plays
in which the focus set is finished infinitely often.
\begin{rem}
  It is crucial that while the game has exponentially many
  $\mathsf{Abelard}$-nodes, there are only polynomially many
  $\mathsf{Eloise}$-nodes. In fact, all
  $\mathsf{Eloise}$-nodes $(\Psi,\mathsf{Foc})$ have
  $\mathsf{Foc}\subseteq\Psi$, so the game to
  has at most $4|U|\leq 4n^2$ $\mathsf{Eloise}$-nodes.
\end{rem}


\noindent Next we prove the correctness of our satisfiability games.

\begin{theorem}\label{thm:tabgame}
There is a tableau for $\target$ if and only if $\mathsf{Eloise}$ wins $G$.
\end{theorem}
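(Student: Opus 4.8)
The plan is to establish the two implications separately, using that B\"uchi games are history-free determined.

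\emph{If $\mathsf{Eloise}$ wins $G$, then there is a tableau for $\target$.} Fix a history-free winning strategy $s$ for $\mathsf{Eloise}$ and let $G_s$ be the finite subgraph of $G$ spanned by the positions reachable in $s$-plays. I would build a tableau for $\target$ by ``un-contracting'' $\mathsf{Eloise}$'s moves: a move of $\mathsf{Eloise}$ out of $(\Psi,\mathsf{Foc})$ is a word $w\in(\Sigma_p)^*$, $|w|\le 3n$, encoding a sequence of applications of the non-modal rules, so I attach to $(\Psi,\mathsf{Foc})$ a chain of at most $3n$ tableau nodes starting from one labelled $\Psi\cup\{\assume\}$ and ending at the formal state $\gamma(\Psi\cup\{\assume\},w)$, each non-final node getting as its unique successor the conclusion picked by the corresponding rule application. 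The formal states produced are precisely the $\mathsf{Abelard}$-positions of $G_s$; at such a state $\Gamma$, $\mathsf{Abelard}$ has a move for every modal-rule instance $(\Gamma/\phi_0,\phi_1,\assume)$, so $s$ is defined on the resulting $\mathsf{Eloise}$-position and the modal edge into the start of its chain is present. Since the chain attached to $v_0=(\{\target\},\emptyset)$ begins at a node whose label contains $\target$, this is a tableau for $\target$ provided all traces are finite.

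For the finiteness of traces, suppose a branch $\beta$ of the constructed graph carried an infinite trace $t=\phi_0,\phi_1,\dots$. Then $\beta$ is infinite and, the inserted propositional chains being finite, corresponds to an infinite $s$-play, which $\mathsf{Eloise}$ wins, so the focus component is $\emptyset$ infinitely often along $\beta$. I then run the usual breakpoint (Miyano--Hayashi) argument: along $\beta$ the focus evolves under $\delta$ except that it is reset to the deferrals of the new formal state at a modal step taken with empty focus. After the first modal step $M$ past any point with empty focus, $\phi_{M+1}\in\delta(\phi_M,\cdot)$ forces $\phi_M$ to be one of the two modal literals used in that step, whence $\phi_{M+1}$ lies in the reset focus; and once $\phi_j$ lies in the focus the latter is nonempty, so no further reset happens and, by monotonicity of $\delta$ on sets, $\phi_j$ stays in the focus for all $j>M$ --- contradicting that the focus empties infinitely often.

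\emph{If there is a tableau for $\target$, then $\mathsf{Eloise}$ wins $G$.} By Theorem~\ref{thm:sattab} and Lemma~\ref{lem:fin-model} it suffices to start from a finite $\assume$-model $F=(W,N,I)$ with a world $x_0\in\sem{\target}$. $\mathsf{Eloise}$ plays maintaining a world $x$ with $x\in\sem{\Psi}$ for her current position $(\Psi,\mathsf{Foc})$ (with $x_0$ at $v_0$): from $(\Psi,\mathsf{Foc})$ she performs the propositional rule applications dictated by truth in $F$ at $x$, choosing at disjunctions --- as in the proof of Theorem~\ref{thm:sattab} --- a disjunct still true under the least timeout witnessing $x\in\sem{\cdot}$ (Lemma~\ref{lem:universal-timeouts}), thereby reaching a formal state $\Gamma$ with $x\in\sem{\Gamma}$ (with $|w|\le 3n$ since each step removes an unguarded operator); when $\mathsf{Abelard}$ picks a modal instance $\diamod a\phi_0,\boxmod a\phi_1\in\Gamma$, Lemma~\ref{lem:rule} supplies a successor world in $\sem{\phi_0}\cap\sem{\phi_1}\cap\sem{\assume}$ to carry to the new position. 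Plays that terminate end in $\mathsf{Abelard}$-positions, which $\mathsf{Eloise}$ wins; for an infinite play, the timeout-minimising choices ensure (as in Theorem~\ref{thm:sattab}) that no least fixpoint is unfolded infinitely often on any trace, so every trace along the branch is finite, and hence the focus empties infinitely often by the converse direction of the Miyano--Hayashi correspondence (if the focus were eventually never empty, K\"onig's Lemma applied to the finitely branching, levelwise-nonempty $\delta$-tree along the branch would yield an infinite trace).

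\emph{Expected main obstacle.} Beyond the (routine but fiddly) invariant that the focus always consists of formulas occurring in the current label --- needed to match tableau traces with focus-tracking --- the crux is to check that $\mathsf{Eloise}$'s propositional reasoning above can be carried out without the focus ever leaving $Q$, i.e.\ while keeping $|\mathsf{Foc}|\le 2$. This is precisely where the limited branching of monotone modalities is essential: the modal rule introduces only two formulas $\phi_0,\phi_1$, and one must argue --- by a syntactic analysis of how deferrals decompose, using alternation-freeness --- that tracking their deferrals through the subsequent propositional decomposition never produces more than two simultaneously live deferrals, so that the timeout-guided moves above are legal moves of $G$. I expect this focus-size bound to be the heart of the proof.
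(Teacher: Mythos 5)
Your first direction follows essentially the same route as the paper: un-contract Eloise's witness words into chains of propositional rule applications, and rule out infinite traces by the observation that after a refocusing modal step the traced formula must enter the (reset) focus and then, by set-monotonicity of $\delta$, keeps the focus nonempty, contradicting the B\"uchi condition; the paper argues the contrapositive (between two breakpoints every focused deferral dies), but the mechanism is identical. Your second direction is genuinely different: the paper extracts Eloise's strategy directly from the tableau, picking for each realized game node a realizing tableau node of minimal tableau timeout $\mathsf{tab}$ and reading the propositional word off the tableau, whereas you detour through Theorem~\ref{thm:sattab} and Lemma~\ref{lem:fin-model} to obtain a finite $\assume$-model and replay the semantic timeout construction inside the game, using determinacy to pass from your memoryful (world-tracking) strategy to a history-free one. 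Since Theorem~\ref{thm:sattab} is proved independently this is not circular, but it re-proves semantic content the tableau already encodes, and your closing step ``all traces finite, hence accepting nodes are visited infinitely often'' needs more care than the K\"onig remark: resets occur at modal steps whose \emph{Abelard}-node focus is empty, and an empty focus at an Abelard node is not an accepting node, so the $\delta$-tree can be cut infinitely often without any accepting visit; your argument (like the paper's, which also only concludes $\mathsf{Foc}'_{i+j}=\emptyset$) does not yet bridge this.

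The genuine gap, however, is the one you flag yourself and leave open: you never establish that Eloise's prescribed moves are legal, i.e.\ that $\delta(\mathsf{Foc},w)\in Q$, and the claim you propose to prove --- that tracking the two focused formulas through the forced propositional decomposition ``never produces more than two simultaneously live deferrals'' --- is false as stated. A focused deferral can itself be a conjunction of three deferral modal literals: for $\psi=\mu X.\,\diamod{c}((\diamod{a}X\wedge\diamod{b}X)\wedge\diamod{d}X)$ the closure contains the deferral $\theta=(\diamod{a}\psi\wedge\diamod{b}\psi)\wedge\diamod{d}\psi$; after a refocusing modal step on a pair $\diamod{c}\theta,\boxmod{c}q$ the focus is $\{\theta\}$, and to reach a formal state Eloise must apply $(\wedge)$ to $\theta$, so that $\delta(\{\theta\},w)=\{\diamod{a}\psi,\diamod{b}\psi,\diamod{d}\psi\}$ has three elements, leaving $Q$. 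So the ``heart of the proof'' you defer cannot be supplied by the syntactic analysis you envisage, and your completeness direction is incomplete precisely there. Note also that you will not find this step in the paper: its strategy extraction never verifies the size constraint on Abelard-node foci either, so closing the gap requires either a new argument or relaxing the size bound on the focus component of Abelard nodes (which is harmless for the intended complexity analysis, since only the Eloise nodes need to be polynomially few), rather than the bound you expected to prove.
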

\begin{proof}
Let $s$ be a winning strategy for $\mathsf{Eloise}$ in $G$
with which she wins every node in her winning region $\mathsf{win}_\exists$.
For $v=(\Psi,\mathsf{Foc})\in \mathsf{win}_\exists$,
we let $w_{s(v)}$ denote a fixed propositional word such that
$s(\Psi,\mathsf{Foc})=(\gamma(\Psi,w_{s(v)}),\delta(\mathsf{Foc},w_{s(v)}))$, that is, a witness word for the move of $\mathsf{Eloise}$ that
$s$ prescribes at $v$.
We construct a tableau $(W,L)$ over the set
\begin{align*}
W=\{(\gamma(\Psi\cup\{\assume\},w'),\delta(\mathsf{Foc},w'))\in&\,\Pow(\clos)\times\Pow(\dfr)\mid \\
&(\Psi,\mathsf{Foc})\in\mathsf{win}_\exists, w'\text{ is a prefix of } w_{s(\Psi,\mathsf{Foc})}\}.
\end{align*}
We define the label $l(\Phi,\mathsf{Foc}')$ of nodes from $(\Phi,\mathsf{Foc}')\in W$ to be just $\Phi$.
Let $(\Phi,\mathsf{Foc}')=(\gamma(\Psi\cup\{\assume\},w'),\delta(\mathsf{Foc},w'))\in W$. If $w'$ is a proper prefix of $w_{s(\Psi,\mathsf{Foc})}$, then
we have $\Phi\notin\mathsf{states}$;
let $l\in\Sigma_p$ be \emph{the} letter such that $(w',l)$ is
a prefix of $w_{s(\Psi,\mathsf{Foc})}$ and
add the pair $((\Phi,\mathsf{Foc}'),(\gamma(\Phi,l),\delta(\mathsf{Foc}',l)))$
to $L$.  If $w'=w_{s(\Psi,\mathsf{Foc})}$, then we have
$\Phi\in\mathsf{states}$.
For 
$\{\diamod{a}\phi,\boxmod{a}\chi\}\subseteq\Phi$, we distinguish
cases. If $\mathsf{Foc}'=\emptyset$, then put
$\mathsf{Foc}''=\{\phi,\chi\}\cap\dfr$; otherwise, put
$\mathsf{Foc}''=\delta(\mathsf{Foc}',(\diamod{a}\phi,\boxmod{a}\chi))$.
Then add the pair
$((\Phi,\mathsf{Foc}'),(\{\phi,\chi,\assume\},\mathsf{Foc}''))$
to $L$, having
$(\{\phi,\chi\},\mathsf{Foc}'')\in\mathsf{win}_\exists$.  It remains
to show that all traces in $(W,L)$ are finite.  So let
$\tau=\phi_0,\phi_1,\ldots$ be a trace along some branch (encoded by a
word $w$) that is rooted at some node
$(\Phi,\mathsf{Foc}')\in W$ such that
$\phi_0\in \Phi$. By construction, this branch gives rise to an $s$-play
$(\Psi_0,\mathsf{Foc}_0),(\Gamma_0,\mathsf{Foc}'_0),
(\Psi_1,\mathsf{Foc}_1),(\Gamma_1,\mathsf{Foc}'_1),\ldots$ that starts
at $(\Psi_0,\mathsf{Foc}_0)=(\Psi,\mathsf{Foc})$.  
Let $i$ be the least position such that $\mathsf{Foc}'_i=\emptyset$ ($i$
exists because~$s$ is a winning strategy). Since the~$\phi_j$ are
tracked along rule applications, we have $\phi_i\in\Gamma_i$; hence
$\phi_{i+1}\in\mathsf{Foc}_{i+1}$. Let $i'$ be the least position greater
than~$i$ such that $\mathsf{Foc}'_{i'}=\emptyset$ (again,~$i'$ exists
because $s$ is a winning strategy).  Between
$(\Psi_{i+1},\mathsf{Foc}_{i+1})$ and
$(\Gamma_{i'},\mathsf{Foc}'_{i'})$, all formulae from
$\mathsf{Foc}_{i+1}$ (including $\phi_{i+1}$) are transformed to a
non-deferral by the formula manipulations encoded in~$w$.  In
particular, the trace $\tau$ ends between
$\mathsf{node}(\Psi_{i},\mathsf{Foc}_{i})$ and
$\mathsf{node}(\Psi_{i'},\mathsf{Foc}_{i'})$, and hence is finite.

For the converse direction, let $(W,L)$ be tableau for $\target$, 
labeled with
$l:W\to\Pow(\clos)$. We extract a strategy $s$ for
$\mathsf{Eloise}$ in $G$. 
A game node $(\Psi,\mathsf{Foc})$ is \emph{realized} if
there is $v\in W$ such that $\Psi\subseteq l(v)$; then we
say that $v$ \emph{realizes} the game node.
For all realized game nodes $(\Psi,\mathsf{Foc})$, we pick a realizing
tableau node $v(\Psi,\mathsf{Foc})$ that is minimal with respect to 
$\mathsf{tab}$ among the tableau nodes that realize $(\Psi,\mathsf{Foc})$.
Then we construct a propositional word $w=l_0,l_1,\ldots$ as follows,
starting with $\Psi_0=\Psi\cup\{\assume\}$ and $v_0=v(\Psi,\mathsf{Foc})$.
For $i\geq 0$, pick some non-modal letter $l_i=(\phi,b)$ such that
$\phi\in\Psi_i$, $b\in\{0,1\}$
and such that $l(v_{i+1})=\gamma(\Psi_i,l_i)$
where $v_{i+1}\in W$ is \emph{the}
node such that $(v_i,v_{i+1})\in L$.
Such a letter $l_i$ exists since $(W,L)$ is a tableau.
By guardedness of fixpoint variables, this process will eventually 
terminate with a word $w=l_0,l_1,\ldots,l_m$ such that $m\leq 3n$,
since $\Psi_0$ contains at most three formulae and each formula
contains at most $n$ unguarded operators. Put
$s(\Psi,\mathsf{Foc})=(\gamma(\Psi\cup\{\assume\},w),\delta(\mathsf{Foc},w))$,
having $\gamma(\Psi\cup\{\assume\},w)= l(v_m)$.
It remains to show that $s$ is a winning strategy. So let 
$\tau=(\Psi_0,\mathsf{Foc}_0),(\Gamma_0,\mathsf{Foc}'_0),(\Psi_1,\mathsf{Foc}_1),(\Gamma_1,\mathsf{Foc}'_1),\ldots$ be an $s$-play, where $\Psi_0=\{\target\}$ and
$\mathsf{Foc}_0=\emptyset$. It suffices to show that for all $i$ such that
$\mathsf{Foc}_i\neq\emptyset$, there
is $j\geq i$ such that $\mathsf{Foc}_j=\emptyset$.
So let $\mathsf{Foc}_i\neq\emptyset$ and
let $w_i$ be the word that is constructed in the play
from $(\Psi_i,\mathsf{Foc}_i)$ on.
Since $(W,L)$ is a tableau and since
$s$ has been constructed using realizing nodes with minimal tableau timeouts, all traces of formulae from
$\mathsf{Foc}_i$ along the branch that is encoded by $w_i$
are finite. Let $j$ be the least
number such that all such traces have ended after
$2j$ further moves from $(\Psi_i,\mathsf{Foc}_i)$.
Then we have $\mathsf{Foc}'_{i+j}=\emptyset$, as required.
\qed
\end{proof}

\begin{cor}\label{cor:size}
  Every satisfiable formula of size $n$ in the alternation-free
 monotone $\mu$-calculus with the universal modality has a model of size at most
  $4n^2$.
\end{cor}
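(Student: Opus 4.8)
The plan is to combine the two main equivalences already established, namely Theorem~\ref{thm:sattab} (a tableau for $\target$ exists iff $\target$ is $\assume$-satisfiable) and Theorem~\ref{thm:tabgame} (a tableau for $\target$ exists iff $\eloise$ wins~$G$), with the model extraction embedded in the proof of Theorem~\ref{thm:sattab}, but keeping careful track of the size of the model so obtained. Concretely, suppose $\target$ is satisfiable of size $n$; to fit the corollary's statement we first note that plain satisfiability is the special case $\assume=\top$, so we may apply the machinery with this trivial global assumption. By Theorem~\ref{thm:tabgame} there is a tableau for~$\target$ iff $\eloise$ wins~$G$, and by Theorem~\ref{thm:sattab} such a tableau exists. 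The key observation is that $G$ is a B\"uchi game and B\"uchi games are history-free determined~\cite{Mazala02}, so $\eloise$ has a history-free winning strategy $s:V_\exists\rightharpoonup V$.

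Next I would run the tableau construction from the ``$\eloise$ wins~$G$ $\Rightarrow$ tableau'' direction of Theorem~\ref{thm:tabgame} using this history-free~$s$, and then feed the resulting tableau $(W,L)$ into the ``tableau $\Rightarrow$ model'' direction of Theorem~\ref{thm:sattab}, whose model has state set $\{x\in W\mid l(x)\in\mathsf{states}\}$. So the bound to prove reduces to a bound on the number of \emph{state-labelled} tableau nodes produced by this composite construction. Inspecting the construction in Theorem~\ref{thm:tabgame}: the tableau nodes are of the form $(\gamma(\Psi\cup\{\assume\},w'),\delta(\mathsf{Foc},w'))$ for $(\Psi,\mathsf{Foc})\in\mathsf{win}_\exists$ and $w'$ a prefix of the fixed witness word $w_{s(\Psi,\mathsf{Foc})}$; the state-labelled ones are exactly those with $w'=w_{s(\Psi,\mathsf{Foc})}$, hence there is at most one state-labelled tableau node per $\eloise$-node in $\mathsf{win}_\exists$. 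By the Remark following Definition~\ref{defn:satgames}, $G$ has at most $4|U|\leq 4n^2$ $\eloise$-nodes, so the model has at most $4n^2$ states. Finally I would record that this model is a $\top$-model in which $\target$ is satisfiable, i.e.\ a model of $\target$ of size at most $4n^2$.

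The main thing to get right — the ``obstacle'', though it is more a bookkeeping point than a genuine difficulty — is that the two theorems were phrased as mere equivalences, so one must extract the concrete model from their proofs rather than their statements, and verify that the state-counting argument goes through with a \emph{history-free} strategy (this is exactly why determinacy of B\"uchi games, rather than just existence of some winning strategy, is invoked). A secondary point is to confirm that ``size'' in the corollary means number of states of the neighbourhood model and that the constructed $N$, $I$ are well defined on this state set, both of which are immediate from the construction in Theorem~\ref{thm:sattab}. I would also remark in passing that the neighbourhoods of each state are bounded in number and size by the modal literals in the node's label, so the whole model — including its neighbourhood data — has size polynomial in~$n$, which is what the subsequent \NP upper bound will rely on.
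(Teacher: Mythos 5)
Your proposal is correct and follows essentially the same route as the paper's own proof: obtain a winning strategy for Eloise via Theorems~\ref{thm:sattab} and~\ref{thm:tabgame}, build the tableau from the strategy as in the proof of Theorem~\ref{thm:tabgame} (noting that it has at most one formal state per Eloise-node of the winning region, hence at most $4n^2$), and then apply the model construction from the proof of Theorem~\ref{thm:sattab}, whose states are exactly the formal states. Your extra remarks on history-freeness (which is already built into the paper's definition of strategies) and on bookkeeping the neighbourhood data are harmless refinements of the same argument.
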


\begin{cor}\label{thm:mon-mu-np}
The satisfiability checking problem for the alternation-free 
monotone $\mu$-calculus  with the universal modality is in \NP{} (hence \NP-complete).
\end{cor}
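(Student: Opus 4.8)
The plan is to read off an \NP{} decision procedure from the game characterization (Theorems~\ref{thm:sattab} and~\ref{thm:tabgame}) together with the structural fact that the satisfiability game~$G$ has only polynomially many $\mathsf{Eloise}$-nodes, so that a history-free winning strategy for $\mathsf{Eloise}$ is a polynomial-size certificate of satisfiability. \NP-hardness is immediate, since the logic contains full propositional logic. For \NP-membership, I would first reduce the universal modality to global assumptions in the standard way (Remark~\ref{rem:global}): a nondeterministic procedure guesses the set~$T$ of subformulae of the shape $\glbox\phi$ or $\gldia\phi$ that are to hold in the model, replaces each such subformula by~$\top$ or~$\bot$ according to whether it lies in~$T$, and thereby obtains polynomially many instances of satisfiability under a global assumption (one per true $\gldia$-subformula, one per false $\glbox$-subformula, and the instance for the main formula, all under the global assumption collected from the true $\glbox$-subformulae and the false $\gldia$-subformulae); the corresponding witnessing models are combined by disjoint union, which preserves the required values of the global modalities. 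Each such instance has size polynomial in the input, so it suffices to put $\assume$-satisfiability checking of a closed formula~$\target$ into \NP. By Remark~\ref{rem:sharing} we may additionally assume, at the cost of polynomial blowup of the closure, that $\target$ and~$\assume$ are guarded (and, by polynomial-time normalizations, clean and irredundant), with alternation-freeness preserved; so the hypotheses of Sections~\ref{sec:tracking}--\ref{sec:games} are met.

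By Theorems~\ref{thm:sattab} and~\ref{thm:tabgame}, $\target$ is $\assume$-satisfiable iff $\mathsf{Eloise}$ wins the Büchi game~$G$ of Definition~\ref{defn:satgames}. Since Büchi games are history-free determined and $G$ has at most $4n^2$ $\mathsf{Eloise}$-nodes, $\mathsf{Eloise}$ wins~$G$ iff she has a history-free winning strategy~$s$; the restriction of~$s$ to the $\mathsf{Eloise}$-nodes reachable in $s$-plays from~$v_0$ mentions at most $4n^2$ $\mathsf{Eloise}$-nodes, each a pair $(\Psi,\mathsf{Foc})$ of subsets of~$\clos$ of size at most~$2$, and assigns to each such node one $\mathsf{Abelard}$-node, which we annotate with a witness word $w\in(\Sigma_p)^*$ of length at most $3n$ realizing that move. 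This certificate has polynomial size. The \NP-algorithm guesses it and verifies in polynomial time that it encodes a winning strategy: (i) $v_0=(\{\target\},\emptyset)$ is in the domain; (ii) for every node~$v=(\Psi,\mathsf{Foc})$ in the domain, the annotated move is legal, i.e.\ $\gamma(\Psi\cup\{\assume\},w)$ is a formal state and agrees with the first component of the stored $\mathsf{Abelard}$-node while $\delta(\mathsf{Foc},w)$ agrees with its focus set (both computable in polynomial time), and every successor of that $\mathsf{Abelard}$-node $(\Gamma,\mathsf{Foc}')$ --- there are at most $|\Gamma|^2\le n^2$ of them, one per pair $\{\diamod a\phi_0,\boxmod a\phi_1\}\subseteq\Gamma$ --- again lies in the domain; and (iii) the Büchi acceptance condition holds on all $s$-plays.

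The crux is step~(iii): verifying that a strategy is winning in a Büchi game normally requires inspecting the whole game graph, which here is exponential. The resolution is that fixing~$s$ collapses~$G$ to the subgraph reachable under~$s$, which has only polynomially many nodes --- at most $4n^2$ $\mathsf{Eloise}$-nodes and at most one reachable $\mathsf{Abelard}$-node per reachable $\mathsf{Eloise}$-node, since each $\mathsf{Eloise}$-node has a unique $s$-successor. On this polynomial-size subgraph, $\mathsf{Eloise}$ loses some $s$-play exactly if there is a reachable cycle consisting of non-accepting nodes (conditions (i)--(ii) already exclude $s$-plays that end in an $\mathsf{Eloise}$-node), and this can be decided in polynomial time by standard graph search. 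Hence the entire verification is polynomial, and combining it with the guessing for~$T$ and for~$s$ shows the problem is in \NP; with \NP-hardness this gives \NP-completeness. (Alternatively one could appeal to Corollary~\ref{cor:size}, guessing a neighbourhood model of size at most~$4n^2$ --- compactly represented, with polynomially many neighbourhoods per state as in the model construction within the proof of Theorem~\ref{thm:sattab} --- together with a state, and verifying by polynomial-time alternation-free $\mu$-calculus model checking that $\target$ holds at the state and $\assume$ everywhere; the game-based argument above is preferred here as it is self-contained given the preceding development.)
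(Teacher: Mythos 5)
Your proposal is correct and follows essentially the same route as the paper: guess a history-free winning strategy for $\mathsf{Eloise}$ in the game~$G$ of Definition~\ref{defn:satgames} (polynomial-size because there are only polynomially many $\mathsf{Eloise}$-nodes) and verify it in polynomial time on the polynomial-size structure induced by the strategy, with satisfiability equivalence supplied by Theorems~\ref{thm:sattab} and~\ref{thm:tabgame}. You merely make explicit some points the paper leaves implicit (witness words for move admissibility, the non-accepting-cycle check for the Büchi condition, and the reduction of the universal modality to global assumptions via Remark~\ref{rem:global}), which is fine.
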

\begin{proof}
Guess a winning strategy $s$ for $\mathsf{Eloise}$ in
$G$ 
 and verify that $s$ is a winning strategy. Verification can be done in polynomial
time since the structure obtained from 
$G$ by imposing $s$ is of polynomial size and since the
admissibility of single moves can be checked in polynomial time.\qed
\end{proof}
\noindent By the translations recalled in
Example~\ref{expl:game-cpdl}, we obtain moreover
\begin{cor}\label{cor:game-cpdl}
  Satisfiability-checking in concurrent propositional dynamic logic
  CPDL and in the alternation-free fragment of game logic is in \NP{} (hence \NP-complete).
\end{cor}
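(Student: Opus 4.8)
The plan is to reduce satisfiability checking for both logics to satisfiability checking in the alternation-free monotone $\mu$-calculus with the universal modality, and then invoke Corollary~\ref{thm:mon-mu-np}. The reduction uses the translation $(-)^\sharp$ recalled in Example~\ref{expl:game-cpdl}, together with the observation that CPDL embeds into game logic simply by not using the dualization operator $(-)^d$, so that it suffices to handle (alternation-free) game logic and then specialize.

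Concretely, I would carry out three checks. First, \emph{truth preservation}: $\phi$ is satisfiable in game logic iff $\phi^\sharp$ is satisfiable in the monotone $\mu$-calculus; this is established in~\cite{EnqvistEA19} (analogously to Pratt's translation of PDL), and no global assumptions or universal modality are actually needed here since game logic and CPDL as treated in Example~\ref{expl:game-cpdl} carry no TBox mechanism. Second, \emph{polynomial size}: the clauses $\tau_{\gamma\cap\delta}(\psi)=\tau_\gamma(\psi)\wedge\tau_\delta(\psi)$ and the analogous one for $\cup$ duplicate $\psi^\sharp$ and hence blow up the syntax tree exponentially, but under our size measure — the cardinality of the closure — the duplicated subformula is shared, so $|\phi^\sharp|$ is polynomial (in fact linear) in $|\phi|$; the translated formula need not be guarded because the clauses for $(-)^*$ and $(-)^\times$ may introduce unguarded fixpoint variables, so I apply the guardedness transformation of Remark~\ref{rem:sharing}, which costs only a quadratic blowup of the closure, and I note that cleanness and irredundancy can be ensured cheaply by renaming bound variables and deleting vacuous fixpoints. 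Third, \emph{alternation-freeness}: by inspection of $\tau_\gamma$, a free least fixpoint variable in $\phi^\sharp$ originates from a starred program $\gamma^*$ and a free greatest fixpoint variable from a demonically iterated program $\gamma^\times$, so a subformula of $\phi^\sharp$ containing both can arise only from a nesting of $\gamma^\times$ within $\gamma^*$ (or vice versa) not separated by a test, i.e.\ only when $\phi$ is not alternation-free; hence the alternation-free fragment of game logic maps into the alternation-free fragment of the monotone $\mu$-calculus, and the guardedness transformation preserves this property. For CPDL there is no $(-)^\times$ at all, so every translated formula is trivially alternation-free.

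Putting the pieces together, CPDL-satisfiability and alternation-free-game-logic-satisfiability are polynomial-time many-one reducible to satisfiability in the guarded alternation-free monotone $\mu$-calculus with the universal modality, which lies in \NP{} by Corollary~\ref{thm:mon-mu-np}; \NP-hardness is immediate since both logics contain full propositional logic (or, more pointedly, contain monotone modal logic, which is \NP-hard~\cite{Vardi89}), so both problems are \NP-complete. The main obstacle is not conceptual but one of careful bookkeeping around the size measure: making precise that the exponential tree-blowup from the $\cap$/$\cup$ clauses collapses under the closure measure, that the quadratic blowup of the guardedness transformation is measured on the closure and not some other parameter, and that the composition of all these steps stays polynomial; the alternation-freeness claim, by contrast, is a routine structural induction over the program structure $\gamma$.
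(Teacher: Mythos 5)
Your proposal is correct and follows essentially the same route as the paper, which obtains the corollary directly from the translations recalled in Example~\ref{expl:game-cpdl} (the $(-)^\sharp$ translation with no blowup under the closure measure, the quadratic guardedness transformation, preservation of alternation-freeness) together with Corollary~\ref{thm:mon-mu-np}. One small caution: for CPDL the translated formulae are not literally free of greatest fixpoints, since boxes $\boxmod{\gamma}$ are handled as $\diamod{\gamma^d}$ and dualization turns $(-)^*$ into $(-)^\times$; the correct (and sufficient) observation, which your embedding step already provides, is that CPDL formulae are vacuously alternation-free game logic formulae and hence land in the alternation-free fragment of the monotone $\mu$-calculus.
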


\section{Conclusion}

\noindent We have shown that satisfiability checking in the
alternation-free fragment of the monotone $\mu$-calculus with the
universal modality is only \NP-complete, even when formula size is
measured as the cardinality of the closure. Via straightforward
translations (which have only quadratic blow-up under the mentioned
measure of formula size), it follows that both concurrent
propositional dynamic logic (CPDL) and the alternation-free fragment
of game logic are also only \NP-complete under their original
semantics, i.e.\ with atomic programs interpreted as neighbourhood
structures (they become \ExpTime-complete when atomic programs are
interpreted as relations). We leave as an open problem whether the
upper bound \NP{} extends to the full monotone $\mu$-calculus, for
which the best known upper bound thus remains \ExpTime{}, by results
on the coalgebraic $\mu$-calculus~\cite{CirsteaEA11a}, or
alternatively by the translation into the relational $\mu$-calculus
that we give in the proof of the finite model property
(Lemma~\ref{lem:fin-model}).

\bibliographystyle{myabbrv}
\bibliography{coalgml}

\providecommand{\noopsort}[1]{}
\begin{thebibliography}{10}

\bibitem{BaaderEA03}
F.~Baader, D.~Calvanese, D.~L. McGuinness, D.~Nardi, and P.~F. Patel-Schneider,
  eds.
\newblock {\em The Description Logic Handbook}.
\newblock Cambridge University Press, 2003.

\bibitem{BradfieldStirling06}
J.~Bradfield and C.~Stirling.
\newblock Modal {$\mu$}-calculi.
\newblock In {\em Handbook of Modal Logic}, pp. 721--756. Elsevier, 2006.

\bibitem{BruseFL15}
F.~Bruse, O.~Friedmann, and M.~Lange.
\newblock On guarded transformation in the modal {\(\mu\)}-calculus.
\newblock {\em Log. J. {IGPL}}, 23(2):194--216, 2015.

\bibitem{Chellas80}
B.~F. Chellas.
\newblock {\em Modal Logic}.
\newblock Cambridge University Press, 1980.

\bibitem{CirsteaEA11a}
C.~C{\^{\i}}rstea, C.~Kupke, and D.~Pattinson.
\newblock {EXPTIME} tableaux for the coalgebraic {$\mu$}-calculus.
\newblock {\em Log.\ Meth.\ Comput.\ Sci.}, 7, 2011.

\bibitem{Donini03}
F.~Donini.
\newblock Complexity of reasoning.
\newblock In Baader et~al. \cite{BaaderEA03}, pp. 96--136.

\bibitem{EnqvistEA19}
S.~Enqvist, H.~H. Hansen, C.~Kupke, J.~Marti, and Y.~Venema.
\newblock Completeness for game logic.
\newblock In {\em Logic in Computer Science, {LICS} 2019}, pp. 1--13. {IEEE},
  2019.

\bibitem{EnqvistEA15}
S.~Enqvist, F.~Seifan, and Y.~Venema.
\newblock Monadic second-order logic and bisimulation invariance for
  coalgebras.
\newblock In {\em Logic in Computer Science, LICS 2015}. IEEE, 2015.

\bibitem{FischerLadner79}
M.~Fischer and R.~Ladner.
\newblock Propositional dynamic logic of regular programs.
\newblock {\em J. Comput.\ Sys.\ Sci.}, 18:194--211, 1979.

\bibitem{FriedmannLange13a}
O.~Friedmann and M.~Lange.
\newblock Deciding the unguarded modal {\(\mathrm{\mu}\)}-calculus.
\newblock {\em J.\ Appl.\ Non-Classical Log.}, 23:353--371, 2013.

\bibitem{GorankoPassy92}
V.~Goranko and S.~Passy.
\newblock Using the universal modality: Gains and questions.
\newblock {\em J.\ Log.\ Comput.}, 2:5--30, 1992.

\bibitem{HansenKupke04}
H.~Hansen and C.~Kupke.
\newblock A coalgebraic perspective on monotone modal logic.
\newblock In J.~Ad{\'{a}}mek and S.~Milius, eds., {\em Coalgebraic Methods in
  Computer Science, {CMCS} 2004}, vol. 106 of {\em ENTCS}, pp. 121--143.
  Elsevier, 2004.

\bibitem{HausmannEA16}
D.~Hausmann, L.~Schr\"oder, and C.~Egger.
\newblock Global caching for the alternation-free coalgebraic {$\mu$}-calculus.
\newblock In {\em Concurrency Theory, CONCUR 2016}, vol.~59 of {\em LIPIcs},
  pp. 34:1--34:15. Schloss Dagstuhl - Leibniz-Zentrum für Informatik, 2016.

\bibitem{KrachtWolter99}
M.~Kracht and F.~Wolter.
\newblock Normal monomodal logics can simulate all others.
\newblock {\em J.\ Symb.\ Log.}, 64(1):99--138, 1999.

\bibitem{Ladner77}
R.~Ladner.
\newblock The computational complexity of provability in systems of modal
  propositional logic.
\newblock {\em SIAM J.\ Comput.}, 6, 1977.

\bibitem{LangeStirling01}
M.~Lange and C.~Stirling.
\newblock Focus games for satisfiability and completeness of temporal logic.
\newblock In {\em Logic in Computer Science, {LICS} 2001}, pp. 357--365. IEEE
  Computer Society, 2001.

\bibitem{Mazala02}
R.~Mazala.
\newblock Infinite games.
\newblock In {\em Automata, Logics, and Infinite Games}, vol. 2500 of {\em
  LNCS}, pp. 23--42. Springer, 2001.

\bibitem{MiyanoHayashi1984}
S.~Miyano and T.~Hayashi.
\newblock Alternating finite automata on {$\omega$}-words.
\newblock {\em Theor.\ Comput.\ Sci.}, 32:321--330, 1984.

\bibitem{NiwinskiWalukiewicz96}
D.~Niwinski and I.~Walukiewicz.
\newblock Games for the {$\mu$}-calculus.
\newblock {\em Theor.\ Comput.\ Sci.}, 163:99--116, 1996.

\bibitem{Parikh85}
R.~Parikh.
\newblock The logic of games and its applications.
\newblock {\em Ann.\ Discr.\ Math.}, 24:111--140, 1985.

\bibitem{Pauly99}
M.~Pauly.
\newblock Bisimulation for general non-normal modal logic.
\newblock Unpublished manuscript, 1999.

\bibitem{PaulyThesis}
M.~Pauly.
\newblock {\em Logic for social software}.
\newblock PhD thesis, Universiteit van Amsterdam, 2001.

\bibitem{PaulyParikh03}
M.~Pauly and R.~Parikh.
\newblock Game logic -- an overview.
\newblock {\em Stud.\ Log.}, 75(2):165--182, 2003.

\bibitem{Peleg87}
D.~Peleg.
\newblock Concurrent dynamic logic.
\newblock {\em J.\ ACM}, 34:450--479, 1987.

\bibitem{Pratt81}
V.~Pratt.
\newblock A decidable mu-calculus: Preliminary report.
\newblock In {\em Foundations of Computer Science, FOCS 1981}, pp. 421--427.
  {IEEE} Computer Society, 1981.

\bibitem{BenthemEA19}
J.~van Benthem, N.~Bezhanishvili, and S.~Enqvist.
\newblock A propositional dynamic logic for instantial neighborhood semantics.
\newblock {\em Stud.\ Log.}, 107(4):719--751, 2019.

\bibitem{Vardi89}
M.~Vardi.
\newblock On the complexity of epistemic reasoning.
\newblock In {\em Logic in Computer Science, LICS 1989}, pp. 243--252. {IEEE}
  Comp.\ Soc., 1989.

\end{thebibliography}


\newpage
\appendix

\section{Omitted lemmas and proofs}

\subsection*{Full Proof of Lemma~\ref{lem:fin-model}}

\noindent We write relational models in the form
$C=(W,(R_a)_{a\in\mathsf{R}},I)$ where~$\mathsf{R}$ is a set of
relation symbols,~$W$ is the set of worlds and~$I$ interprets
propositional atoms like for neighbourhood models, and for each
relation symbol~$a\in\mathsf{R}$, $R_a\subseteq W\times W$ is a
transition relation. We interpret the relational $\mu$-calculus over
relational models in the standard fashion.

We define a translation~$t$ from monotone $\mu$-calculus formulae into
relational $\mu$-calculus formulae built using as relation symbols the
given actions in~$\mathsf{A}$ and an additional fresh relation
symbol~$e$. The translation is inductively defined by
\vspace{-5pt}
\begin{equation*}
  t(\boxmod{a}\psi)  = \boxmod{a}\diamod{e}\psi
  \qquad t(\diamod{a}\psi)  = \diamod{a}\boxmod{e}\psi
\vspace{-5pt}
\end{equation*}
and commutation with all other constructs. We claim that the following
are equivalent for formulae $\psi,\phi$ in the monotone
$\mu$-calculus.
\begin{enumerate}[label=(\emph{\alph*})]
\item\label{item:nbhd-inf} $\psi$ is $\phi$-satisfiable (over a neighbourhood model).
\item\label{item:rel-inf} $t(\psi)\land \mathsf{\boxtimes}\, t(\phi)$ is satisfiable
  (over a relational model).
\item\label{item:rel-fin-ag} $t(\psi)\land \mathsf{\boxtimes}\, t(\phi)$ is satisfiable over a 
  finite relational model.
\item\label{item:rel-fin} $ t(\psi)$ is satisfiable over a finite
  relational model in which every state satisfies~$t(\phi)$.
\item\label{item:nbhd-fin} $\psi$ is $\phi$-satisfiable over a finite
  neighbourhood model.
\end{enumerate}

Here, $\mathsf{\boxtimes}\,\phi$ abbreviates the formula
\vspace{-7pt}
\begin{equation*}
  \nu X.\,\phi\land\Box X
\vspace{-7pt}
\end{equation*}
where $\Box X$, in turn, is the conjunction of all formulae
$\boxmod{a} X$ for $a\in\mathsf{A}\cup\{e\}$; that is, $\boxtimes$ is
the modality `in all reachable states' mentioned in
Remark~\ref{rem:submodels}.

In the claimed equivalence,
\ref{item:nbhd-fin}$\implies$\ref{item:nbhd-inf} is trivial, and
\ref{item:rel-inf}$\implies$\ref{item:rel-fin-ag} is by the well-known
finite model property of the relational $\mu$-calculus. The
implication \ref{item:rel-fin-ag}$\implies$\ref{item:rel-fin} is by
restricting to the reachable part of the model.  The remaining
implications are proved as follows.

\ref{item:nbhd-inf}$\implies$\ref{item:rel-inf}: Let $F=(W,N,I)$ be a
neighbourhood model. We construct a relational model $C=(W',(R_a),I')$
as follows.
\begin{itemize}
\item We take $W'$ to be the union (w.l.o.g.\ disjoint) of~$W$ and
  $\Pow W$.
\item For $a\in\mathsf{A}$, we put
  $R_a=\{(x,A)\in W\times\Pow W\mid A\in N(a,x)\}$. For the fresh
  relation symbol~$e$, we put
  $R_e=\{(A,x)\in\Pow W\times W\mid x\in A\}$.
\item For $p\in \mathsf{P}$, we put $I'(p)=I(p)$.
\end{itemize}
We indicate the respective semantics in~$F$ and~$C$ by superscripts,
and show by induction over monotone $\mu$-calculus formulae~$\psi$
that
\vspace{-5pt}
\begin{equation*}
  \Sem{\psi}^F_\sigma\subseteq \Sem{t(\psi)}^C_{\sigma'}
\vspace{-5pt}
\end{equation*}
for valuations~$\sigma\colon V\to 2^W$, $\sigma'\colon V\to 2^{W'}$
such that $\sigma(Y)\subseteq\sigma'(Y)$ for all $Y\in V$; the claimed
implication \ref{item:nbhd-inf}$\implies$\ref{item:rel-inf} is then
immediate. The cases for Boolean operators, propositional atoms, and
variables are trivial; the modal cases are by noting that their
translations just reflect the definition of the semantics of the
monotone modalities. We do the fixpoint cases. For the least fixpoint
case, we show that
$\Sem{t(\mu X.\,\psi)}^C_{\sigma'}\cap W=\Sem{\mu X.\,t(\psi)}^C_{\sigma'}\cap W$ is a
prefixpoint of the function~$\Sem{\psi}^X_\sigma$ defining
$\Sem{\mu X.\,\psi}^F_\sigma$: We have
\begin{align*}
  & \sem{\psi}^X_\sigma(\Sem{\mu
    X.\,t(\psi)}^C_{\sigma'}\cap W) \\
  &=\sem{\psi}^F_{\sigma[X\mapsto \Sem{\mu
    X.\,t(\psi)}^C_{\sigma'}]\cap W}\by{definition}\\
  & \subseteq \sem{t(\psi)}^C_{\sigma'[X\mapsto \Sem{\mu
    X.\,t(\psi)}_{\sigma'}]} \by{IH}\\
  & =\Sem{\mu  X.\, t(\psi)}_{\sigma'}\by{fixpoint}.
\end{align*}
For the greatest fixpoint case, we show that
$\Sem{\nu X.\,\psi}_\sigma$ is a postfixpoint of the function
$\Sem{t(\psi)}^X_{\sigma'}$ defining
$\Sem{t(\nu X.\,\psi)}_{\sigma'}=\Sem{\nu X.\,(t(\psi))}_{\sigma'}$:
\begin{align*}
  & \Sem{\nu X.\,\psi}_\sigma \\
  & = \Sem{\psi}_{\sigma[X\mapsto \Sem{\nu X.\,\psi}_\sigma]} \by{fixpoint}\\
  & \subseteq \Sem{t(\psi)}_{\sigma'[X\mapsto \Sem{\nu X.\,\psi}_\sigma]}
    \by{IH}\\
  & = \Sem{t(\psi)}_{\sigma'}^X(\Sem{\nu X.\,\psi}_\sigma)\by{definition}.
\end{align*}

\ref{item:rel-fin}$\implies$\ref{item:nbhd-fin}: Let $C=(W,(R_a),I)$
be finite relational model. We define a neighbourhood model $F=(W,N,I)$
by
\begin{equation*}
  N(a,w)=\{\{w'\in W\mid (n,w')\in R_e\} \mid (w,n)\in R_a\}.
\end{equation*}
Similarly as above, we indicate the respective semantics in~$F$ or~$C$
by superscripts, and show by induction on monotone $\mu$-calculus
formulae~$\psi$ that
\begin{equation*}
  \Sem{\psi}_{\sigma}^F=\Sem{t(\psi)}_\sigma^C
\end{equation*}
for  valuations $\sigma\colon V\to 2^W$. The claimed
implication \ref{item:rel-fin}$\implies$\ref{item:nbhd-fin} is then
immediate.

The cases for Boolean operators, propositional atoms, variables, and
indeed for fixpoints are trivial; e.g.\ in the inductive step for
least fixpoints, just note that $\Sem{\mu X.\,\psi}^F_\sigma$ and
$\Sem{t(\mu X.\,\psi)}^C_\sigma=\Sem{\mu X.\,t(\psi)}^C_\sigma$ are,
by induction, least fixpoints of the same function on~$\Pow W$. The
modal cases are as follows.

$\boxmod{a}\psi$: We have
\begin{align*}
  & \Sem{\boxmod{a}\psi}^F_\sigma \\
  & = \{w\mid \forall S\in N(a,x).\,\exists w'\in S.\,w'\in\Sem{\psi}^F_\sigma\}
  \by{semantics}\\
  & = \{w\mid \forall S\in N(a,x).\,\exists w'\in S.\,w'\in\Sem{t(\psi)}^C_\sigma\}
    \by{IH}\\
  & = \{w\mid \forall n.\,(w,n)\in R_a\to\exists w'.\,(n,w')\in R_e\land w'\in\Sem{t(\psi)}^C_\sigma\}
    \by{construction}\\
  & = \Sem{\boxmod{a}\diamod{e}t(\psi)}^C_\sigma\by{semantics}\\
  & = \Sem{t(\boxmod{a}\psi)}^C_\sigma\by{definition}.
\end{align*}

$\diamod{a}\psi$: Dual to the previous case. \qed

\vspace{10pt}

\noindent We now carry out the
\textbf{full proof of the claim $\psem{\phi}\subseteq\sem{\phi}$ from the
proof of Theorem~\ref{thm:sattab}}.
The proof is by induction over $\phi$, 
where the Boolean and the modal
cases are straightforward. E.g.  let
$x\in\psem{\langle a\rangle\phi}$. Then we have
$\langle a\rangle\phi\in l(x)$.  If $l(x)$ contains no $a$-box formula,
then we have $N(a,x)=\{\emptyset\}$ so that we have
$x\in\sem{\langle a\rangle \phi}$ since we have $\emptyset\in N(a,x)$
and, trivially, for all $y\in\emptyset$, $y\in\sem{\phi}$. If $l(x)$
contains some $a$-box formula, then there is, by definition of $N$, 
some $S\in N(a,x)$ such that for all $y\in S$, we have
$y\in\psem{\phi}$. By the inductive hypothesis, we have
$y\in\sem{\phi}$ for all $y\in S$ so that
$x\in\sem{\langle a\rangle\phi}$, as required. For the fixpoint cases,
we proceed as follows:
\begin{itemize}[wide]
\item If $\phi=\nu X.\,\chi$, then we
have to show
\begin{equation*}
\psem{\nu X.\,\chi}\subseteq\sem{\nu X.\,\chi}=\bigcup\{Y\subseteq Z\mid
Y\subseteq \sem{\chi}^X(Y)\},
\end{equation*} 
which follows if $\psem{\nu X.\,\chi}$ is a postfixpoint of
$\sem{\chi}^X$, that is, if
$\psem{\nu X.\,\chi}=\psem{\chi}\subseteq \sem{\chi}^X(\psem{\nu
  X.\,\chi})=\sem{\chi}_{[X\mapsto\psem{\nu X.\,\chi}]}$.  We show more generally that for all
$\phi\in\mathsf{sub}(\chi)$ and all $\sigma$ such that for all
$Y\in\mathsf{FV}(\phi)$, $\sigma(Y)=\psem{\clf{Y}}$, we have
$\psem{\clf{\phi}}\subseteq \sem{\phi}_\sigma$.  We proceed by induction
over $\phi$.  If $\phi$ is closed, then $\phi\in\clos$, $\clf{\phi}=\phi$
and we are done by the outer
inductive hypothesis. The Boolean and modal cases are again
straightforward.  If $\phi=X$, then we have
$\sem{X}_\sigma=\sigma(X)=\psem{\clf{X}}$ so that
$\psem{\clf{X}}\subseteq \sem{X}_\sigma$, as required. The remaining case is
that $\phi$ is a fixpoint literal; since~$\phi$ is not closed,~$\phi$
is, by alternation-freeness, a greatest fixpoint
$\phi=\nu Y.\,\phi_1$. We have to show
$\psem{\clf{\nu Y.\,\phi_1}}=\psem{\clf{\phi_1}}\subseteq \sem{\nu
  Y.\,\phi_1}_\sigma$; by coinduction, it suffices to show that
$\psem{\clf{\phi_1}}\subseteq \sem{\phi_1}^Y_\sigma(\psem{\clf{\phi_1}})$. Since
$\sem{\phi_1}^Y_\sigma(\psem{\clf{\phi_1}})=\sem{\phi_1}^Y_\sigma(\psem{\clf{\nu
  Y.\,\phi_1}})=\sem{\phi_1}^Y_\sigma(\psem{\clf{Y}})=
\sem{\phi_1}_{\sigma[Y\mapsto(\psem{\clf{Y}})]}$, this follows from
the inner inductive hypothesis.
\item  If $\phi=\mu X.\,\chi$, then we
have to show
\begin{equation*}
\psem{\mu X.\,\chi}=\psem{\clf{\chi}}\subseteq\sem{\mu X.\,\chi}.
\end{equation*}
For $m\in\mathbb{N}$, let $\mathsf{tab}(m)$ denote the set of
states $x\in W$ such that $\mathsf{tab}(x)=m$.
It suffices to
show that for all subformulae $\phi$ of $\chi$ and all $m\in\mathbb{N}$, 
we have
\begin{equation*}
\psem{\clf{\phi}}\cap \mathsf{tab}(m)\subseteq\sem{\clf{\phi}}.
\end{equation*}
So let $x=(\Psi,\mathsf{Foc})\in \psem{\clf{\phi}}\cap \mathsf{tab}(m)$. We distinguish cases.
\begin{itemize}[wide]
\item [a)] If $\mathsf{Foc}\vdash_{\mathsf{PL}}\clf{\phi}$, then we
 proceed
  by lexicographic induction over $(m,\mathsf{u}(\phi),\mathsf{len}(\phi))$, where $\mathsf{u}(\phi)$ denotes the
  number of distinct fixpoint variables $X$ for which there is a
  formula that can be obtained from $\phi$ by repeatedly
  replacing fixpoint variables $Y$ with $\theta(Y)$ and in which
  $X$ is \emph{unguarded}. If $\phi$ is
  closed, then we are done by the outer induction hypothesis. 
  The Boolean cases are again straightforward.
  E.g. if $\phi=\psi_0\wedge\psi_1$, then we have
\begin{align*}  
  \psem{\clf{\phi}}\cap \mathsf{tab}(m)&=\psem{\clf{\psi_0}}\cap\psem{\clf{\psi_1}}\cap\mathsf{tab}(m)
\end{align*}  
   and $\sem{\clf{\phi}}=\sem{\clf{\psi_0}}\cap\sem{\clf{\psi_1}}$.
  We have 
  \begin{align*}
  \psem{\clf{\psi_0}}\cap\psem{\clf{\psi_1}}\cap \mathsf{tab}(m)\subseteq \psem{\clf{\psi_i}}\cap \mathsf{tab}(m)\subseteq\sem{\clf{\psi_i}}
  \end{align*}
   for $i\in\{0,1\}$,
  where the second inclusion holds by the inductive hypothesis
   since $(m,\mathsf{u}(\phi),\mathsf{len}(\phi))\geq_l(m,\mathsf{u}(\psi_i),\mathsf{len}(\psi_i))$.
   If $\chi=\diamod{a}\psi_0$, then we have $\clf{\chi}=\diamod{a}(\clf{\psi_0})$ and
   $\diamod{a}(\clf{\psi_0})\in l(x)$. If $l(x)$ contains
   no box formula, then we have $N(a,x)=\{\emptyset\}$ and
   hence $x\in\sem{\diamod{a}(\clf{\psi_0})}$, as required.
   If $l(x)$ contains some box formula, then there is, by definition
   of $N$, some $S\in N(a,x)$ such that for all
   $y\in S$, $y\in\psem{\clf{\psi_0}}$. Also, $y\in\mathsf{fto}(m-1)$ since
 we have taken at least one step in the tableau
 to get from $x$ to $y$. We have 
\begin{align*}
(m,\mathsf{u}(\diamod{a}\psi_0),\mathsf{len}(\diamod{a}\psi_0))>_l(m-1,\mathsf{u}(\psi_0),\mathsf{len}(\psi_0)),
\end{align*}
 even though possibly $\mathsf{u}(\diamod{a}\psi_0)<\mathsf{u}(\psi_0)$.
 For all $y\in S$, we have $y\in\sem{\clf{\psi_0}}$ by the inductive hypothesis. Hence  $x\in\sem{\diamod{a}(\clf{\psi_0})}$, as required.
 If $\phi=X$ with
  $\theta(X)=\mu X.\,\psi_1$, then
\begin{align*}  
 \psem{\clf{X}}\cap \mathsf{tab}(m)=\psem{\clf{\psi_1}}\cap\mathsf{tab}(m)
 \end{align*}
  and
  $\sem{\clf{X}}=\sem{\clf{\psi_1}}$. The fixpoint variable $X$ is
  unguarded in 
  the formula $X$ but it is not possible to (repeatedly) 
  replace fixpoint variables $Y$ in $\psi_1$ with $\theta(Y)$ 
  in such a way
  that $X$ becomes unguarded in the resulting formula.
  Thus we have $\mathsf{u}(X)=\mathsf{u}(\psi_1)+1$ and 
  $(m,\mathsf{u}(X),\mathsf{len}(X))>_l(m,\mathsf{u}(\psi_1),\mathsf{len}(\psi_1))$
  so that we are done by the inner inductive hypothesis.
  If $\phi=\mu Y.\,\phi_1$, then  
  \begin{align*}
  \psem{\clf{\mu
    Y.\,\psi_1}}\cap\mathsf{tab}(m)=\psem{\clf{\psi_1}}\cap\mathsf{tab}(m) 
    \end{align*}
     and  $\sem{\clf{\mu Y.\,\psi_1}}=\sem{\clf{\psi_1}}$. Since
     \begin{align*}
  (m,\mathsf{u}(\mu Y.\,\psi_1),\mathsf{len}(\mu Y.\,\psi_1))>_l
  (m,\mathsf{u}(\psi_1),\mathsf{len}(\psi_1)),
  \end{align*}
    we are done by the inner inductive hypothesis.
    Eventually, we reach a closed formula or the case where $m=0$,
    that is, a state where all traces have ended.
    Since traces end when the traced formula is
    transformed to a non-deferral 
    (which means that $\phi$ in the above induction becomes 
    a closed formula),
    $\phi$ is closed in the case that $m=0$. As mentioned above, the 
    outer inductive hypothesis finishes the case for closed formulae.

\item [b)]If $\mathsf{Foc}\not\vdash_{\mathsf{PL}}\clf{\phi}$, then we again proceed by induction over $(m,\mathsf{u}(\phi),\mathsf{len}(\phi))$. The inductive proof is identical to the previous item, with the exception of the base case with $m=0$. Then
we have reached a state with empty focus set and have
$\mathsf{Foc}'\vdash_{\mathsf{PL}}\phi'$ after next modal step which is a refocussing step. Then we proceed as in the previous item.
\end{itemize}

\end{itemize}
\qed

\subsection*{Proof of Corollary~\ref{cor:size}}

Let $\phi$ be a satisfiable formula in the alternation-free monotone 
$\mu$-calculus with the universal modality and put $n=|\phi|$.
By Theorem~\ref{thm:sattab} and Theorem~\ref{thm:tabgame},
$\mathsf{Eloise}$ wins the satisfiability game $G$. Then there
is a winning strategy $s$ for $\mathsf{Eloise}$. Using this strategy
in the tableau construction from the proof of Theorem~\ref{thm:tabgame},
we obtain a tableau which contains one formal state per 
$\mathsf{Eloise}$-node in the winning region $\mathsf{win}_\exists$ in $G$,
that is, there are at most $4n^2$ formal states.
Using the model construction from the proof of Theorem~\ref{thm:sattab},
we obtain a model that is built over the formal states of the tableau,
obtaining the claimed bound.
\qed

\end{document}